\documentclass[10pt,journal]{IEEEtran}

\usepackage[dvipsnames]{xcolor}
\usepackage[normalem]{ulem}

\newcommand*{\tran}{\mathsf{T}}
\newcommand*{\herm}{\mathsf{H}}

\usepackage{cite}

\usepackage{graphicx}
\usepackage{epstopdf}
\graphicspath{{./figures/}}
\DeclareGraphicsExtensions{.pdf,.jpeg,.jpg,.png,.eps}

\usepackage{amsmath,amssymb,amsthm,amsfonts,amscd,amsbsy,amsxtra,amsfonts}
\usepackage{mathtools}
\usepackage{siunitx}
\usepackage{algorithm}
\usepackage{algpseudocode}
\usepackage{float}                
\floatstyle{ruled}\restylefloat{algorithm}
\algnewcommand\Input{\State\textbf{input: }}
\algnewcommand\Output{\State\textbf{output: }}
\algnewcommand\Initialize{\State\textbf{initialize: }}

\usepackage[shortlabels,inline]{enumitem}

\usepackage{booktabs}
\usepackage{multirow}

\usepackage[caption=false,font=footnotesize]{subfig}
\usepackage[export]{adjustbox}

\DeclareMathOperator{\sinc}{sinc}

\DeclareMathOperator*{\argmax}{arg\,max}

\DeclareMathOperator{\mse}{MSE}

\DeclarePairedDelimiter\floor{\lfloor}{\rfloor}
\DeclarePairedDelimiter\abs{\lvert}{\rvert}%
\DeclarePairedDelimiter\norm{\lVert}{\rVert}%

\newtheorem{lemma}{Lemma}
\newtheorem{corollary}{Corollary}

\newtheorem{remark}{Remark}
\newtheorem{proposition}{Proposition}

\IEEEoverridecommandlockouts

\begin{document}
\bstctlcite{IEEEexample:BSTcontrol}

\title{A Novel ISAC Waveform Based on Orthogonal Delay-Doppler Division Multiplexing with FMCW}

\author{
    Kehan~Huang,~\IEEEmembership{Graduate~Student~Member,~IEEE,}
    Akram~Shafie,~\IEEEmembership{Member,~IEEE,}
    Min~Qiu,~\IEEEmembership{Senior~Member,~IEEE}
    Elias~Aboutanios,~\IEEEmembership{Senior~Member,~IEEE,}
    and~Jinhong~Yuan,~\IEEEmembership{Fellow,~IEEE,}
    
    \thanks{The work of K. Huang, A. Shafie, M. Qiu, and J. Yuan was supported in part by the Australian Research Council (ARC) Discovery Project under Grant DP220103596, in part by the ARC Linkage Project under Grant LP200301482, and in part by the Connectivity Innovation Network, Australia. The work of M. Qiu was also supported in part by the SJTU ExploreX Funding under Grant SD6040004/153. This work has been presented in part at the 2025 IEEE International Conference on Communications \cite{Huang2025ODDMFMCW}. \emph{(Corresponding authors: Akram Shafie and Min Qiu.)}}
    \thanks{K. Huang, A. Shafie, E. Aboutanios, and J. Yuan are with the School of Electrical Engineering and Telecommunications (EET), University of New South Wales (UNSW), Sydney, NSW 2052, Australia (email: kehan.huang@unsw.edu.au; akram.shafie@unsw.edu.au; elias@ieee.org; j.yuan@unsw.edu.au). M. Qiu was with the School of EET, UNSW, Sydney, NSW 2052, Australia, when this work was conducted. He is now with the Global College, Shanghai Jiao Tong University, Shanghai 200240, China (email: min\_qiu@sjtu.edu.cn).}
}

\markboth{Accepted for publication in IEEE Transactions on Wireless Communications.}{Accepted for publication in IEEE Transactions on Wireless Communications.}

\maketitle

\begin{abstract}
    In this work, we propose the orthogonal delay-Doppler (DD) division multiplexing (ODDM) modulation with frequency modulated continuous wave (FMCW) (ODDM-FMCW) waveform to enable integrated sensing and communication (ISAC) with a low peak-to-average power ratio (PAPR). We first propose a square-root-Nyquist-filtered FMCW (SRN-FMCW) waveform to address limitations of conventional linear FMCW waveforms in ISAC systems. To better integrate with ODDM, we generate SRN-FMCW by embedding symbols in the DD domain, referred to as a DD-SRN-FMCW frame. A DD chirp compression receiver is designed to obtain the channel response efficiently. Next, we construct the proposed ODDM-FMCW waveform for ISAC by superimposing a DD-SRN-FMCW frame onto an ODDM data frame. A comprehensive performance analysis of the ODDM-FMCW waveform is presented, covering peak-to-average power ratio, spectrum, ambiguity function, and Cramér-Rao bound for delay and Doppler estimation. Numerical results show that the proposed ODDM-FMCW waveform delivers excellent ISAC performance in terms of root mean square error for sensing and bit error rate for communications.
\end{abstract}

\begin{IEEEkeywords}
    ODDM, OTFS, integrated sensing and communications (ISAC), FMCW, channel estimation, PAPR
\end{IEEEkeywords}

\section{Introduction}\label{sec:intro}

Integrated sensing and communication (ISAC) has emerged as a key focus area in next-generation wireless networks, allowing more efficient use of frequency resources. Building upon a unified waveform, ISAC systems aim to support data transmission between communication nodes while simultaneously performing sensing tasks. With the potential to drive multiple next-generation applications, ISAC is especially relevant in high-mobility environments, including satellite and vehicle-to-everything (V2X) communications \cite{GP2024ISAC_6G,Wang2023DopplerSquint_OTFS,Wang2025FlexibleDDMA}.

To address the challenges of reliable communications in high-mobility environments, delay-Doppler (DD) modulation schemes, such as orthogonal time frequency space (OTFS) modulation \cite{Hadani2017OrthogonalModulation} and affine frequency division multiplexing (AFDM) modulation \cite{Bemani2021AFDM}, have been proposed to exploit channel diversity in the DD domain, offering enhanced robustness to doubly-selective channels. Building on OTFS, the recently proposed orthogonal DD division multiplexing (ODDM) modulation introduces a practical DD orthogonal pulse (DDOP) with low out-of-band emissions (OOBE) while maintaining \emph{sufficient} biorthogonality at fine DD resolution for the given signal bandwidth and time span \cite{Lin2022OrthogonalModulation,Tong2024ODDM_PhyChan}. Although initially designed for communications, the underlying DDOP shares similarities with pulse-Doppler radar signals, highlighting its potential for ISAC applications. Notably, ODDM and OTFS share the same DD domain information-bearing symbols, but the use of DDOP allows ODDM to directly embed DD domain symbols into continuous-time waveforms. The waveform generation of OTFS, on the other hand, is based on the time-frequency orthogonal pulse (TFOP), which does not align with the inherent operation of doubly selective channels \cite{Tong2024ODDM_PhyChan,Shafie2025ODDM_Spectrum_Orthogonality}. In this work, we investigate ISAC within the ODDM framework. However, the proposed ISAC design applies to both ODDM and OTFS by considering the corresponding input-output relations.

To enable sensing and channel estimation, a DD-domain impulse pilot (DDIP) was proposed for OTFS in \cite{Raviteja2019EmbeddedChannels}, and later adopted for ODDM in \cite{Tong2024ODDM_PhyChan,Huang2024ODDM_Performance,Shan2025ODDM_ChanEst_GridRefinement}. DDIP is well-localized in the fast-time/delay dimension, facilitating the effective decoupling of delay and Doppler parameters \cite{Richards2022Fundamentals_RadarSignalProcessing}. However, it can have a high peak-to-average power ratio (PAPR), leading to potential non-linear signal distortion and low power efficiency in practical transceivers.

Recent studies have designed DD-domain low-PAPR pilots by spreading the pilot energy across multiple pilot symbols. For instance, \cite{Liu2023ScatterPilot_OTFS} introduced the scattered pilots, where the placement of pilot symbols is carefully chosen to avoid self-interference. However, this placement relies on the frame size and channel support, making it less robust to practical systems. To mitigate self-interference, orthogonal sequences, such as Zadoff-Chu (ZC) sequence and chirp sequence, have been employed \cite{Shi2021MIMO-OTFS_DetPilot,Shen2025SC_DDEstimation,Ma2025OTFS_LowPAPRPilots}. In \cite{Shi2021MIMO-OTFS_DetPilot}, a ZC-based spread pilot was proposed, which requires an excessive guard interval. This is because a cyclic extension is padded to utilize the cyclic orthogonality of ZC sequences, and the length of the sequence necessitates the span of the guard interval to prevent interference with data symbols. Similarly, \cite{Ma2025OTFS_LowPAPRPilots} proposed another ZC-based pilot without guard interval. Despite higher spectral efficiency, this design fails to fully exploit the cyclic orthogonality, which may compromise sensing performance. To further improve spectral efficiency, \cite{Mishra2022OTFS_SuperimposedPilots} introduced a superimposed random pilot and recovered the channel state using a minimum mean square error (MMSE) estimator. However, this approach requires prior knowledge of the channel support, which is often unavailable for communication receivers.

While \cite{Shi2021MIMO-OTFS_DetPilot} and \cite{Ma2025OTFS_LowPAPRPilots} considered fractional Doppler shifts, the aforementioned pilot designs do not address the off-grid sampling of both delay and Doppler shifts---an inherent aspect of ISAC systems. To account for this, \cite{Bondre2024DFRC-OTFS_RootMUSIC} proposed a chirp-based spread pilot and applied a root-MUSIC algorithm for DD estimation, but the ideal pulse assumed for OTFS is not realizable. Building on practical pulses, \cite{Ubadah2024ZakOTFS_ISAC} proposed a 2D-chirp-based spread pilot. Nonetheless, the 2D chirp sequence lacks orthogonality, which can limit its sensing performance. Therefore, designing a channel-robust, low-PAPR ISAC waveform remains an open problem.

Frequency modulated continuous wave (FMCW) is widely used in automotive radars for its simplicity and low PAPR \cite{Richards2022Fundamentals_RadarSignalProcessing,Patole2017AutomotiveRad_Review}. Hence, the combination of ODDM and FMCW can potentially produce low-PAPR ISAC waveforms that are fully compatible with OTFS/ODDM technologies. \cite{Zegrar2022OTFS_FMCW} investigated the symbol-level representation of FMCW signals within the OTFS framework. However, considering practical continuous-time signals, the combination of OTFS/ODDM and FMCW has not been well researched due to the following challenges. Firstly, FMCW signals do not satisfy the Nyquist intersymbol interference (ISI) criterion with respect to OTFS/ODDM signals, which can cause complicated mutual interference in ISAC systems \cite{Zegrar2022OTFS_FMCW,Tong2024ODDM_PhyChan,Zhang2025OCDM_FDE}. Secondly, with limited bandwidth and a high pulse repetition interval, the stretch processing technique used in conventional FMCW radar receivers is susceptible to range-Doppler coupling and range skew \cite{Richards2022Fundamentals_RadarSignalProcessing}. Finally, FMCW signals exhibit high OOBE, making them unsuitable for practical communication systems.

To address these challenges, we propose an ISAC framework based on a novel ODDM-FMCW waveform. The main contributions of the paper are as follows:

\textbf{1}) We propose the square-root Nyquist (SRN)-filtered FMCW (SRN-FMCW) waveform to enable sensing and channel estimation with controlled chirp compression sidelobes and reduced transmission overhead. To support coexistence with ODDM systems, we introduce the DD-domain embedded SRN-FMCW (DD-SRN-FMCW) waveform, which can be generated using an ODDM transmitter. Building on the ODDM receiver, we further propose \emph{DD chirp compression} to efficiently obtain the DD response (DDR) of the channel.

\textbf{2}) We propose the ODDM-FMCW waveform for ISAC, where an ODDM data frame is transmitted with a superimposed DD-SRN-FMCW frame. For data-aided sensing (DAS) at the collocated sensing receiver, we introduce a super-resolution sensing algorithm based on orthogonal matching pursuit (OMP) \cite{Rasheed2020OTFS_MU_OMP}. For joint channel estimation and data detection (JCEDD) at the communication receiver, we combine OMP and soft successive interference cancellation with minimum mean square error (SIC-MMSE) detector \cite{Li2024SICMMSE_Turbo}.

\textbf{3}) We present a comprehensive performance analysis for the proposed DD-SRN-FMCW and ODDM-FMCW signals. First, a good approximation is given for their PAPR complementary cumulative distribution function (CCDF). Second, we characterize their power spectral density (PSD) and highlight the differences from a linear FMCW signal. Additionally, the ambiguity function of the DD-SRN-FMCW signal and the Cramér-Rao bound (CRB) for delay and Doppler estimation are derived, confirming its strong sensing capability.

\textbf{4}) We demonstrate via simulations that the proposed ODDM-FMCW waveform offers significantly lower PAPR than ODDM with a superimposed DDIP (ODDM-DDIP). Owing to its robustness against channel estimation errors, ODDM-FMCW consistently outperforms ODDM-DDIP in terms of bit error rate (BER) at the communication receiver, approaching that of ODDM with perfect channel state information (CSI). Moreover, low normalized root mean square error (NRMSE) in delay and Doppler estimation can be achieved at the sensing receiver. 

\textbf{Notations}: $(\cdot)^*$ and $(\cdot)^\herm$ denote the complex conjugate and Hermitian transpose, respectively. $\left[\cdot\right]^\dagger$ denotes the pseudoinverse. $\boldsymbol{F}_N$ denotes the normalized $N$-point DFT matrix. $[\cdot]_N$ means modulo $N$. $\floor{\cdot}$ means floor. $\delta(\cdot)$ denotes the Dirac delta function. $\norm{\boldsymbol{A}}$ outputs the 2-norm of matrix $\boldsymbol{A}$. Denote the column-stacking vectorization of $\boldsymbol{A}\in\mathbb{C}^{m\times n}$ by $\boldsymbol{a}=\mathrm{vec}(\boldsymbol{A})$, with inverse operation $\mathrm{vec}_{m\times n}^{-1}(\boldsymbol{a})=\boldsymbol{\boldsymbol{A}}$. $\gcd(\cdot)$ outputs the greatest common divisor. The important symbols and their definitions are summarized in Table \ref{tab:notation}.

\begin{table}[t]
  \caption{Important symbols used in this paper.}
  \label{tab:notation}
  \centering
  \begin{tabular}{ll}
    \toprule
    Symbol & Meaning \\ 
    \midrule
    $M$, $N$                    & Number of delay and Doppler bins, respectively.\\
    $a(t)$, $g(t)$              & SRN pulse and Nyquist pulse.\\
    $\epsilon$                  & Chirp rate.\\
    $c(t)$, $c_a(t)$            & Linear chirp and SRN-filtered chirp.\\
    $\boldsymbol{c}$            & Discrete chirp sequence.\\
    ${}_{ex}$, ${}_{ce}$        & Linearly and cyclically extended signal, respectively.\\
    $\boldsymbol{X}$            & General ODDM frame or ODDM-FMCW frame.\\
    $\boldsymbol{X}_c$, $\boldsymbol{X}_d$    & DD-SRN-FMCW frame and ODDM data frame.\\
    $\rho$                      & Chirp-data-power-ratio (CDPR).\\
    \bottomrule
  \end{tabular}
\end{table}

\section{Preliminaries and Motivations}\label{sec:preliminaries&motivations}

\begin{figure*}[t]
    \centering
    \includegraphics[width=2\columnwidth,trim={0 0 0 0},clip]{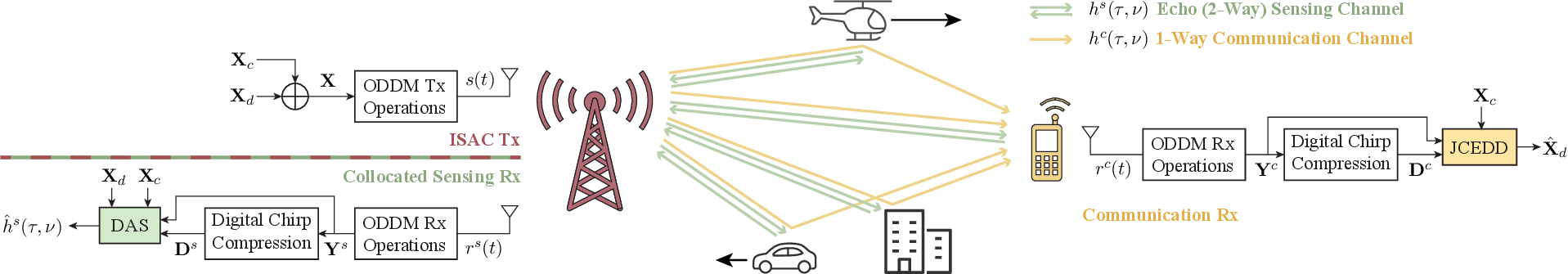}
    \caption{Considered ISAC scenario.}
    \label{fig:isac_chan}
\end{figure*}

In this work, we explore ISAC in a dynamic environment, as depicted in Fig.~\ref{fig:isac_chan}. We consider an ISAC transmitter that transmits the proposed ODDM-FMCW signal to a communication receiver, while simultaneously using the same signal for monostatic sensing with a collocated sensing receiver. The ODDM-FMCW frame $\boldsymbol{X}$ is constructed by superimposing a DD-SRN-FMCW frame $\boldsymbol{X}_c$ onto an ODDM data frame $\boldsymbol{X}_d$, where $\boldsymbol{X}_c$ functions as both the sensing signal and channel estimation pilot. An ODDM transmitter is used to transmit $\boldsymbol{X}$. At both receivers, the time-domain received signal $r(t)$ is first processed by the ODDM receiver to obtain the received ODDM-FMCW frame $\boldsymbol{Y}$. Then, DD chirp compression is performed to compute the DDR $\boldsymbol{D}$ of the channel. For the collocated sensing receiver, a DAS algorithm is used to enable super-resolution sensing. For the communication receiver, a JCEDD algorithm is used to perform data detection.

In the rest of this section, we present preliminaries for the ODDM-FMCW system introduced in the following sections. We begin with the doubly-selective channel. Next, we detail the simplified/approximate implementation of ODDM modulation \cite{Lin2022OrthogonalModulation}. We then review the linear FMCW radar waveform and its stretch-processing receiver. Additionally, we discuss chirp compression, a receiver technique employed in chirp-based pulsed radars. Finally, we highlight challenges in applying these standalone ODDM and FMCW systems to ISAC.

\subsection{Doubly-Selective LTV Channel}\label{sec:ds_channel}

To characterize signal propagation in Fig.~\ref{fig:isac_chan}, we adopt a doubly-selective linear time-varying (LTV) channel model with $P$ point scatterers. Each scatterer has complex path coefficient $\alpha_p$, initial propagation delay $\tau_p$, and velocity $v_p$. We define $v_p$ as the changing rate of the propagation distance along the $p$-th path, yielding a time-varying propagation delay $\dot{\tau}_p(t) = \tau_{p} + \frac{v_p}{c}t$, with $c$ being the speed of light. Following the Swerling 1 target model and short-time assumption \cite{Richards2022Fundamentals_RadarSignalProcessing,Tong2024ODDM_PhyChan}, $\alpha_p$ and $v_p$ remain constant during a radar coherent processing interval, which is also the ODDM frame duration.

Suppose the transmitted \emph{passband} signal $s_\mathrm{pb}(t)$ occupies a bandwidth $\mathcal{B}$ around carrier frequency $f_c$. When $\mathcal{B}\ll f_c$, the received \emph{passband} signal can be expressed as\footnote{Although passband signals are real-valued, without loss of generality, we represent them as analytic (complex) signals to streamline the derivations in this work. This is justified under the narrowband assumption $\mathcal{B}\ll f_c$.}
\begin{align}\label{eq:r_pb}
    r_\mathrm{pb}(t)=\sum_{p=1}^{P} \alpha_p s_{\mathrm{pb}}(t-\dot{\tau}_p(t)) + z_{\mathrm{pb}}(t),
\end{align}
where $z_{\mathrm{pb}}(t)$ is the passband Gaussian noise.

Let $s_{\mathrm{pb}}(t)=s(t)e^{j2\pi f_ct}$ denote the passband of a linearly modulated baseband signal $s(t)$. Substituting into $\eqref{eq:r_pb}$ gives
\begin{align}\label{eq:r_pb_linear_mod}
    r_{\mathrm{pb}}(t) = \sum_{p=1}^{P} \alpha_p s(t\!-\!\dot{\tau}_p(t))e^{j2\pi f_c(t-\tau_p)}e^{j2\pi\nu_pt} \!+\! z_{\mathrm{pb}}(t),
\end{align}
where $\nu_p=-\frac{f_cv_p}{c}$ is the Doppler shift of the $p$-th path. Assuming perfect carrier synchronization and the narrowband approximation $s(t-\dot{\tau}_p(t))\approx s(t-\tau_p)$, down-conversion of $r_{\mathrm{pb}}(t)$ by $e^{-j2\pi f_ct}$ yields
\begin{align}\label{eq:s2r_baseband}
    r(t) =\sum_{p=1}^{P} h_p s(t-\tau_p) e^{j2\pi \nu_pt} + z(t),
\end{align}
where
\begin{align}\label{eq:hp}
    h_p = \alpha_p e^{-j2\pi f_c\tau_p}
\end{align}
is the baseband channel coefficient of the $p$-th path, and $z(t)$ is the baseband complex Gaussian noise. We clarify that \eqref{eq:s2r_baseband} takes the form of Bello's baseband LTV channel model \cite{Bello1963CharacterizationChannels}, characterized by a DD-domain spreading function:
\begin{align}\label{eq:dd_chan}
    h(\tau,\nu)=\sum_{p=1}^{P} h_p \delta (\tau-\tau_p)\delta (\nu-\nu_p).
\end{align}
In our ISAC scenario illustrated in Fig.~\ref{fig:isac_chan}, two doubly-selective channels are of interest: the echo sensing channel $h^s(\tau,\nu)$ and the one-way communication channel $h^c(\tau,\nu)$.

\subsection{ODDM Modulation}\label{sec:oddm}

ODDM was proposed for doubly-selective channels as a DD-domain multicarrier modulation. In this section, we detail the simplified/approximate ODDM implementation \cite{Lin2022OrthogonalModulation}.

\subsubsection{ODDM Transmitter}\label{sec:oddm_tx}

Consider the transmission of $MN$ symbols within nominal bandwidth $\mathcal{B}=\frac{M}{T}$ and transmission time $NT$. ODDM arranges the symbols $\boldsymbol{X}\in\mathbb{C}^{M\times N}$ onto a DD grid with delay resolution $\Delta \tau=\frac{T}{M}$ and Doppler resolution $\Delta \nu=\frac{1}{NT}$. At the transmitter, $\boldsymbol{X}$ is first converted into the delay-time (DT) domain by $N$-point IDFT to obtain $\boldsymbol{X}^{\mathrm{DT}} = \boldsymbol{X}\boldsymbol{F}_N^\herm$. Then, $\boldsymbol{X}^{\mathrm{DT}}$ is vectorized to obtain the time-domain digital samples as $\boldsymbol{s} = \mathrm{vec}(\boldsymbol{X}^{\mathrm{DT}})$, with its $q$-th element denoted by $s[q]$. Finally, the continuous-time baseband signal $s(t)$ is generated by $a(t)$-based pulse shaping \cite{Lin2022OrthogonalModulation}
\begin{align}
    s(t) &= \sum_{q=0}^{MN-1}s[q]a\left(t-q\frac{T}{M}\right),
    \label{eq:s_a}
\end{align}
where $a(t)$ is a square-root Nyquist (SRN) pulse that serves as the subpulse of ODDM. To accommodate underspread channels with maximum delay spread $\tau_{\max}<T$, ODDM adopts a frame-wise cyclic prefix (CP) with a duration of $\tau_{\max}$.

\subsubsection{ODDM Receiver}\label{sec:oddm_rx}
At the ODDM receiver, the received baseband signal $r(t)$ is matched filtered using $a(t)$ and sampled at $t=q\frac{T}{M}$ for $q\in\{0,\dots,MN-1\}$, to obtain the time-domain received sample vector $\boldsymbol{r}\in\mathbb{C}^{MN\times1}$. For the LTV channel in \eqref{eq:dd_chan}, the recevied samples $r[q]$ become \cite{Tong2024ODDM_PhyChan}
\begin{align}
    r[q] = \sum_{p=1}^{P} h_p \Biggl(\sum_{d=0}^{2Q} &e^{j2\pi\frac{k_p\left(q-l_p-d\right)}{MN}} g\left(\left(d+\floor{l_p}-l_p\right)\frac{T}{M}\right) 
    \nonumber\\
    &\times  s\left[\left[q-\floor{l_p}-d\right]_{MN}\right]\Biggr) + z[q],
    \label{eq:io_time}
\end{align}
where $l_p=\tau_p\frac{M}{T}$ and $k_p=\nu_pNT$ are the normalized delay and Doppler shifts of the $p$-th path, respectively, $g(t) = a(t)\circledast a^*(t)$ is a causal symmetric Nyquist pulse centered at $t=Q\frac{T}{M}$, and $z[q]\sim\mathcal{CN}\left(0,\sigma_z^2\right)$ is the sampled noise. $Q$ is the half-span truncation length of $g(t)$ (in symbol intervals) for practical implementation. We set $Q=20$ so that the pulse support beyond $2Q\frac{T}{M}$ is negligible \cite{Tong2024ODDM_PhyChan,Shafie2024DDOP_TF_Loc,Lin2022OrthogonalModulation}. Note that $l_p$ and $k_p$ can take fractional (off-grid) values.

To convert time-domain samples $\boldsymbol{r}$ back to the DD domain, the DT domain matrix is constructed as $\boldsymbol{Y}^{\mathrm{DT}}=\mathrm{vec}_{M\times N}^{-1}(\boldsymbol{r})$. Then, $N$-point DFT is performed to obtain the received DD-domain matrix as $\boldsymbol{Y} = \boldsymbol{Y}^{\mathrm{DT}}\boldsymbol{F}_N$. The symbol at the $m$-th delay and $n$-th Doppler bin in $\boldsymbol{Y}$ is given by \cite{Tong2024ODDM_PhyChan}
\begin{align}\label{eq:io_dd}
    Y[m,n] = \sum_{p=1}^{P} Y_p[m,n] + Z[m,n],
\end{align}
where
\begin{align}
    Y_p[m&,n] = h_p \sum_{d=0}^{2Q} e^{j2\pi\frac{\left(m-l_p-d\right)k_p}{MN}} g\left(\left(d\!+\!\floor{l_p}\!-\!l_p\right)\frac{T}{M}\right)
    \nonumber\\
    &\times\sum_{\tilde{n}=0}^{N-1} \phi(\tilde{n}\!+\!k_p\!-\!n) \psi[m,d,\tilde{n}] X\!\left[\left[m\!-\!\floor{l_p}\!-\!d\right]_{M},\tilde{n}\right]
    \nonumber
\end{align}
is the signal component corresponding to the $p$-th path, and $Z[m,n]\sim\mathcal{CN}(0,\sigma_z^2)$ is the DD-domain sampled noise. Here,
\begin{align}\label{eq:phi}
    \phi(n) = \frac{1}{N}\sum_{n'=0}^{N-1}e^{j2\pi\frac{nn'}{N}} = \frac{1-e^{j2\pi n}}{N\left(1-e^{j\frac{2\pi}{N}n}\right)}
\end{align}
denotes the Dirichlet kernel, whereas
\begin{align}\nonumber
    \psi[m,d,\tilde{n}] = 
    \begin{cases}
        e^{-j2\pi\frac{\tilde{n}}{N}}, & m\!-\!\floor{l_p}\!-\!d < 0,
        \\
        1, & 0\leq m\!-\!\floor{l_p}\!-\!d \leq M\!-\!1,
        \\
        e^{j2\pi\frac{\tilde{n}}{N}}, & m\!-\!\floor{l_p}\!-\!d > M\!-\!1,
    \end{cases}
\end{align}
represents the CP-induced phase rotation. Leveraging the input-output relation in \eqref{eq:io_dd}, DDIP has become the most considered pilot design for channel estimation in OTFS/ODDM literature \cite{Raviteja2019EmbeddedChannels,Tong2024ODDM_PhyChan,Huang2024ODDM_Performance,Shan2025ODDM_ChanEst_GridRefinement,Wang2023DopplerSquint_OTFS,Yuan2021DataAided_ChanEst}.

\subsection{FMCW Radar and Stretch Processing}\label{sec:fmcw_stretchproc}
FMCW radar is widely used in automotive applications due to its low complexity and energy efficiency \cite{Richards2022Fundamentals_RadarSignalProcessing,Patole2017AutomotiveRad_Review}. In this section, we review its signal model and highlight the key differences from conventional communication systems.

Consider a passband linear FMCW signal with $N$ chirps $c(t)$ and a pulse repetition interval (PRI) of $T$, given by
\begin{align}\label{eq:sc}
    s_c(t) = \sum_{n=0}^{N-1}c(t-nT).
\end{align}
Each chirp $c(t)$ is defined as
\begin{align}\label{eq:c_t}
    c(t) = \dot{c}(t)\Pi_T\!\left(t-\frac{T}{2}\right),
\end{align}
where
\begin{align}\label{eq:cdot_t}
    \dot{c}(t) = e^{j2\pi\left(f_c+\frac{\epsilon}{2}t\right)t}
\end{align}
is an infinite-time linear chirp and $\Pi_T(t)$ denotes a rectangular window of duration $T$. Here $f_c$ is the carrier frequency and $\epsilon$ is the chirp rate. $c(t)$ has an instantaneous frequency of $\frac{d}{dt}\arg(c(t))=f_c+\epsilon t$ and a total bandwidth of $\mathcal{B}=\epsilon T$. As will be shown in Section~\ref{sec:spectrum}, $s_c(t)$ exhibits distinct spectral characteristics compared to communication signals.

When considering the passband channel described in \eqref{eq:r_pb} in Section~\ref{sec:ds_channel}, the received FMCW signal becomes
\begin{align}\label{eq:rc}
    r_c(t) = \sum_{p=1}^{P} \alpha_p \!\sum_{n}^{N-1}c(t\!-nT\!-\tau_p) e^{j2\pi\nu_pt} + z_{\mathrm{pb}}(t).
\end{align}

To obtain the channel's DDR from $r_c(t)$, the FMCW radar receiver employs \emph{stretch processing} (also known as \emph{dechirping}) \cite{Barton2004RadSys_Ana&Mod,Richards2022Fundamentals_RadarSignalProcessing,Patole2017AutomotiveRad_Review}. In particular, $r_c(t)$ is mixed with $s_c(t)$ to obtain the \emph{beat frequency} signal $r_b(t) = s_c(t)r_c^*(t)$. Then, $r_b(t)$ is sampled with interval $\frac{T}{M}$ and reshaped to form a \emph{radar data matrix} $\boldsymbol{R}_b\in\mathbb{C}^{M\times N}$, given by
\begin{align}
    R_b[&m,n] = r_b\left(nT+m\frac{T}{M}\right)
    \nonumber\\
    &=
    \begin{cases}
        \sum_{p=1}^{P} h_p' R_{b,p}[m,n], & m\frac{T}{M} \geq \tau_p,\\
        \sum_{p=1}^{P} h_p' R_{b,p}[m,n]e^{j2\pi\epsilon T\tau_p}, & \text{otherwise},
    \end{cases}
    \label{eq:Rb}
\end{align}
where
\begin{align}\label{eq:hp_fmcw}
    h_p'=\alpha_p e^{j2\pi\tau_p(f_c-\frac{\epsilon}{2}\tau_p)}
\end{align}
is the channel coefficient for the $p$-th path and
\begin{align}\label{eq:Rbp}
    R_{b,p}[m,n] &= \underbrace{e^{j2\pi\frac{T}{M}(\epsilon\tau_p-\nu_p)m}}_{\text{fast-time varying}} \underbrace{e^{-j2\pi T\nu_p n}}_{\text{slow-time varying}}
\end{align}
is the corresponding fast-slow time response.\footnote{
    ``fast time" and ``slow time" follow radar terminology, where ``fast time" refers to samples within a PRI and ``slow time" refers to variations across PRIs \cite{Richards2022Fundamentals_RadarSignalProcessing}. They align with the concepts of delay and time in OTFS/ODDM.
}
Finally, a 2D-DFT yields the channel DDR as $\boldsymbol{D}_b=\boldsymbol{F}_M\boldsymbol{R}_b\boldsymbol{F}_N$.

By inspecting \eqref{eq:Rbp}, it can be inferred that the fast-time and slow-time components of $R_{b,p}[m,n]$ will appear as separable sinc profiles centered at $\tau_p-\frac{\nu_p}{\epsilon}$ and $\nu_p$ in $\boldsymbol{D}_b$, respectively. The slow-time sinc directly corresponds to the Doppler shift $\nu_p$. However, the fast-time sinc gives $\tau_p-\frac{\nu_p}{\epsilon}$, which is the true delay $\tau_p$ distorted by the Doppler shift. This phenomenon is known as \emph{range-Doppler coupling} \cite{Barton2004RadSys_Ana&Mod,Richards2022Fundamentals_RadarSignalProcessing}. In addition, the extra term $e^{j2\pi\epsilon T\tau_p}$ in the second case of \eqref{eq:Rb} introduces a discontinuity in phase, giving rise to the \emph{range skew effect} \cite{Richards2022Fundamentals_RadarSignalProcessing}. These issues become more pronounced when the allocated time ($T$) and bandwidth ($\mathcal{B}$) are relatively limited, thereby restricting FMCW radar performance under typical communication constraints.

From an ISAC standpoint, communication receivers require the baseband channel coefficient $h_p$ as in \eqref{eq:hp} for symbol detection, whose phase depends linearly on $\tau_p$. In contrast, the FMCW channel coefficient $h_p'$ in \eqref{eq:hp_fmcw} contains a distinct phase term proportional to $\tau_p^2$. Accordingly, using FMCW radar for baseband channel estimation requires a nonlinear conversion from $h_p'$ to $h_p$, which can amplify channel estimation errors.

\subsection{Chirp-based Pulsed Radar and Chirp Compression}\label{sec:chirp_compression}

Pulse compression is a common receiver technique for pulsed radars. It is also termed chirp compression when chirp signals are used as radar pulses \cite{Barton2004RadSys_Ana&Mod}. We explore chirp compression in this work since it shares similarities with the matched filtering operations in OTFS/ODDM and single-carrier (SC) system receivers.

Chirp compression relies on the fact that the autocorrelation of the chirp signal $c(t)$ in \eqref{eq:c_t} yields a sinc-like pulse. To understand this, we first look into the ideal chirp compression output, given by the crosscorrelation between the infinite-time chirp $\dot{c}(t)$ in \eqref{eq:cdot_t} and the finite-time chirp $c(t)$ in \eqref{eq:c_t}
\begin{align}
    A_{\dot{c},c}(\tau)&\triangleq\int_{-\infty}^\infty \dot{c}(t) c^*(t-\tau)dt=e^{j2\pi(f_c-\frac{\epsilon}{2}\tau)\tau}\int_{0}^T e^{j2\pi\epsilon\tau t}dt
    \nonumber\\
    &=T e^{j2\pi(f_c-\frac{\epsilon}{2}\tau)\tau} e^{j\pi\epsilon\tau T}\sinc(\epsilon T\tau),
    \label{eq:Act_ideal}
\end{align}
which is essentially the ideal delay ambiguity function of a chirp signal. Since \eqref{eq:Act_ideal} is a sinc-type function of delay, its sharp mainlobe admits straightforward delay estimation.

\begin{figure}[t]
    \centering
    \subfloat[$T\mathcal{B}=64$.]{\label{fig:chirp_compression_m64}\includegraphics[width=1\columnwidth,trim={0 0 0 0},clip]{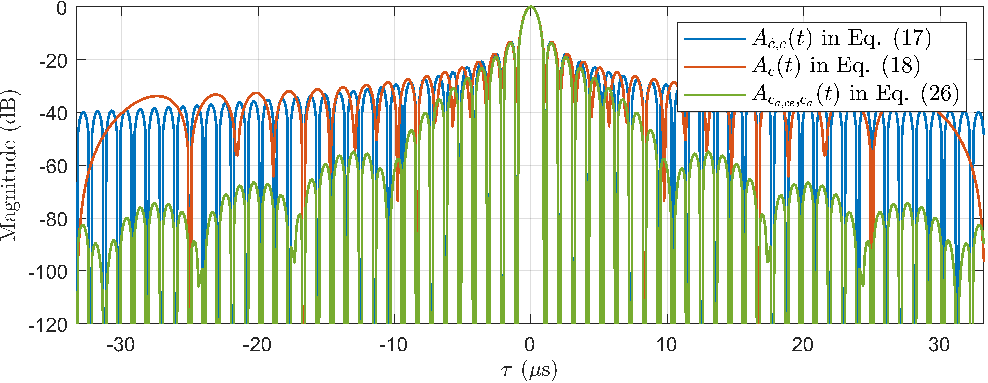}}
    \\
    \centering
    \subfloat[$T\mathcal{B}=256$.]{\label{fig:chirp_compression_m256}\includegraphics[width=1\columnwidth,trim={0 0 0 0},clip]{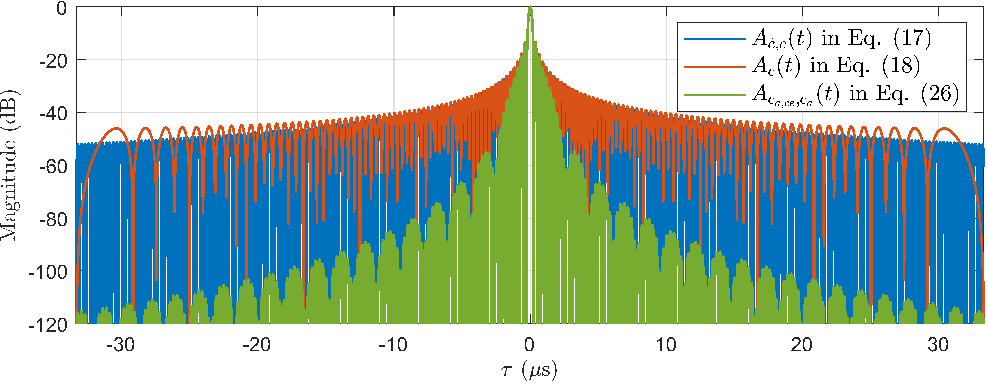}}
    \caption{Delay ambiguity (chirp compression output) of chirps.}
    \label{fig:chirp_compression}
\end{figure}

However, $A_{\dot{c},c}(\tau)$ assumes an ideal infinite-time chirp $\dot{c}(t)$. In practice, finite-time chirps suffer pulse degradation, especially with a limited time-bandwidth product $T\mathcal{B}$. To show this, we now consider the autocorrelation of $c(t)$, written as
\begin{align}
    A_c(\tau)&\triangleq\int_{-\infty}^\infty c(t) c^*(t-\tau)dt
    \nonumber\\
    &=e^{j2\pi(f_c-\frac{\epsilon}{2}\tau)\tau} \int_{\max(0,\tau)}^{\min(T,T+\tau)} e^{j2\pi\epsilon\tau t}dt.
    \label{eq:Act_practical}
\end{align}
Fig.~\ref{fig:chirp_compression} compares $A_{\dot{c},c}(\tau)$ (blue curve) and $A_c(\tau)$ (orange curve) using the parameters in Section~\ref{sec:numerical} (the green curve will be discussed later in Section~\ref{sec:rrc_chirp}). Compared to the ideal $A_{\dot{c},c}(\tau)$, $A_c(\tau)$ exhibits an elevated, broadened sidelobe floor, which becomes more pronounced for lower $T\mathcal{B}$. This directly follows from the $\tau$-dependent integration bounds in \eqref{eq:Act_practical} and, in the frequency domain, corresponds to the classical \emph{Fresnel ripple} \cite{Cook1964ChirpCompression_Sidelobe}. In addition, $A_c(\tau)$ lacks a closed-form expression, hindering superresolution sensing. Most importantly, the zeros of $A_c(\tau)$ do not align with those of the sinc function. That is, the Nyquist ISI criterion may not be satisfied, which complicates the mutual interference between sensing and communication signals in an ISAC context.

\subsection{Challenges and Motivations}\label{sec:challenges&motivations}

The ODDM system discussed in Section~\ref{sec:oddm} is designed for data transmission over doubly-selective channels. Although the widely adopted DDIP shows promising channel estimation and sensing performance \cite{Raviteja2019EmbeddedChannels,Shan2025ODDM_ChanEst_GridRefinement,Yuan2021DataAided_ChanEst}, it exhibits high PAPR and low energy efficiency. Conversely, while the FMCW waveform in Section~\ref{sec:fmcw_stretchproc} provides unit-PAPR sensing capability, it faces challenges in the integration with communication systems, including range-Doppler coupling, range skew, and error amplification. Moreover, passband FMCW systems may not be directly compatible with baseband modulation systems.

The chirp compression method described in Section~\ref{sec:chirp_compression} reveals a potential solution to these challenges. In particular, chirp compression treats each chirp as an individual radar pulse and therefore better aligns with pulse-based modulation schemes such as OTFS/ODDM and SC. This, however, introduces new difficulties, including pronounced sidelobes, a lack of closed-form expression, and violation of the Nyquist ISI criterion. These limitations motivate the DD-SRN-FMCW sensing waveform we propose in the next section.

\section{From Linear FMCW to DD-SRN-FMCW}\label{sec:linearfmcw2rrcfmcw}

In this section, we propose the DD-SRN-FMCW waveform to overcome the limitations of linear FMCW in ISAC applications highlighted in Section~\ref{sec:challenges&motivations}. We first design a SRN-filtered chirp, followed by the SRN-FMCW waveform incorporating multiple chirps. Finally, we introduce DD-SRN-FMCW to achieve compatibility with the ODDM framework.

\subsection{SRN-Filtered Chirp with Chirp Compression}\label{sec:rrc_chirp}

As discussed in \ref{sec:chirp_compression}, the sinc-type delay ambiguity function $A_{\dot{c},c}(\tau)$ in \eqref{eq:Act_ideal} is not achievable with the practical finite-time chirp $c(t)$. However, we note that the sinc-type delay ambiguity function can be \emph{locally} achieved by transmitting an extended chirp signal, e.g.,
\begin{align}\label{eq:c_ex}
    c_{ex}(t)=
    \begin{cases}
        \dot{c}(t), & t\in(-\tau_{\max},T+\tau_{\max}), \\
        0 , & \text{otherwise}.
    \end{cases}
\end{align}
Then, the crosscorrelation between $c_{ex}(t)$ and $c(t)$, given by $A_{c_{ex},c}(\tau) = \int_{-\infty}^\infty c_{ex}(t)c^*(t-\tau)dt$, satisfies $A_{c_{ex},c}(\tau)=A_{\dot{c},c}(\tau)$ for delays $\tau\in(-\tau_{\max},\tau_{\max})$. That is, Fresnel-ripple sidelobes can be avoided within a finite delay interval.

However, $c_{ex}(t)$ extends in both time and frequency compared to $c(t)$, taking more spectral resources to support dispersive channels with large delay spreads. Therefore, we now leverage the sampling theorem to reduce the transmission overhead. To begin with, we sample $\dot{c}(t)$ at $t=m\frac{T}{M}$ for $m\in\mathbb{Z}$ to obtain the discrete chirp sequence
\begin{align}\label{eq:cdot_m}
    \dot{c}[m] = e^{j2\pi\left(\frac{f_cT}{M}m+\frac{\epsilon T^2}{2M^2}m^2\right)}.
\end{align}
Inspired by the sinc-type delay ambiguity in \eqref{eq:Act_ideal}, we establish the following Lemma for $\dot{c}[m]$:
\begin{lemma}\label{lmm:zla}
    If $\frac{\epsilon T^2}{M}\in\mathbb{Z}$ and $\gcd\left(\frac{\epsilon T^2}{M},M\right)=1$, then $\dot{c}[m]$ satisfies the zero linear autocorrelation (ZLA) property:
    \begin{align}\label{eq:zla}
        \sum_{m=0}^{M-1} \dot{c}[m]\dot{c}^{*}[m+\zeta]=M\delta[\zeta],
    \end{align}
    for $\zeta\in\mathbb{Z}$.
\end{lemma}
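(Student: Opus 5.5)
Write $\kappa\triangleq\frac{\epsilon T^2}{M}\in\mathbb{Z}$, so the quadratic phase of \eqref{eq:cdot_m} becomes $\frac{\epsilon T^2}{2M^2}m^2=\frac{\kappa m^2}{2M}$. The plan is to expand the product $\dot{c}[m]\dot{c}^*[m+\zeta]$ directly and reduce the sum in \eqref{eq:zla} to a geometric series in $m$. This is the standard argument for the periodic autocorrelation of Zadoff--Chu/chirp sequences, except that here the shift $\zeta$ is applied to the infinite sequence $\dot{c}[m]$, so no modular wrap-around has to be handled.

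\textbf{Step 1 (expand the product).} I will compute
\begin{align}
    \dot{c}[m]\dot{c}^*[m+\zeta]
    = e^{-j2\pi\frac{f_cT}{M}\zeta}\, e^{-j\pi\frac{\kappa\zeta^2}{M}}\, e^{-j2\pi\frac{\kappa\zeta}{M}m}.
    \nonumber
\end{align}
This follows because $m^2-(m+\zeta)^2=-2m\zeta-\zeta^2$. The linear carrier term $\frac{f_cT}{M}m$ cancels except for a constant factor depending only on $\zeta$. The $m$-dependence therefore reduces to the pure tone $e^{-j2\pi\kappa\zeta m/M}$.

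\textbf{Step 2 (geometric sum).} Summing over $m=0,\dots,M-1$ gives
\begin{align}
    \sum_{m=0}^{M-1}\dot{c}[m]\dot{c}^*[m+\zeta]
    = (\text{unit-modulus constant})\times\sum_{m=0}^{M-1}e^{-j2\pi\frac{\kappa\zeta}{M}m}.
    \nonumber
\end{align}
Since $\kappa\zeta\in\mathbb{Z}$, the inner sum equals $M$ if $M\mid\kappa\zeta$ and $0$ otherwise. This is the Dirichlet kernel $\phi$ in \eqref{eq:phi} evaluated at an integer argument.

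\textbf{Step 3 (use the coprimality).} Because $\gcd(\kappa,M)=1$, we have $M\mid\kappa\zeta$ if and only if $M\mid\zeta$. For $\zeta=0$ the constant prefactor is $1$, which gives $M$.

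\textbf{Main obstacle.} A nonzero multiple of $M$ satisfies $M\mid\zeta$, so the sum is nonzero for every such $\zeta$ and not only for $\zeta=0$. The identity as written therefore holds for $\zeta\in\{-(M-1),\dots,M-1\}$, or equivalently with $\delta[\zeta]$ read as $\delta[[\zeta]_M]$ up to a unit-modulus phase. I expect the intended meaning is the range of lags $|\zeta|<M$ relevant to delays within one chirp period, and I would state the proof under that reading.

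If instead periodicity in $\zeta$ with unit prefactor is wanted, I would also check the prefactor at $\zeta=M$. It equals $e^{-j2\pi f_cT}e^{-j\pi\kappa M}$, which is $1$ only under additional conditions on $f_cT$ and the parity of $\kappa M$. I would record these as a side remark rather than as part of the lemma.
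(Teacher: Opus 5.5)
Your argument is correct and is exactly the paper's: expand the product so that a $\zeta$-dependent unit-modulus phase factors out, reduce to the geometric sum $\sum_{m=0}^{M-1}e^{-j2\pi\kappa\zeta m/M}$, and use $\gcd(\kappa,M)=1$. Your caveat is also well founded. The paper's proof simply asserts that this sum equals $M\delta[\zeta]$, which fails at every nonzero multiple of $M$ (e.g.\ $M=2$, your $\kappa=1$, $f_c=0$, $\zeta=2$ gives $2$), so the lemma should be read for $|\zeta|<M$, and the unit prefactor at $\zeta=M$ that you mention is precisely condition (c) of Proposition~\ref{prop:zca}.
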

\begin{IEEEproof}
    The proof is given in Appendix \ref{app:proof_zla}.
\end{IEEEproof}

Given the ZLA property in Lemma \ref{lmm:zla}, the requirement of linear extension of the linear chirp $\dot{c}(t)$ can be relaxed to the linear extension of the discrete chirp sequence $\dot{c}[m]$, which avoids the transmission overhead in bandwidth. Based on Lemma \ref{lmm:zla}, we further relax the requirement of linear extension to cyclic extension by establishing the following proposition.
\begin{proposition}\label{prop:zca}
    When the three conditions 
    \begin{enumerate*}[(a),itemjoin={{, }},itemjoin*={{, and }}]
        \item $\frac{\epsilon T^2}{M}\in\mathbb{Z}$
        \label{enu:cond1}
        \item $\gcd\left(\frac{\epsilon T^2}{M},M\right)=1$
        \label{enu:cond2}
        \item $f_cT+\frac{\epsilon T^2}{2}\in \mathbb{Z}$
        \label{enu:cond3}
    \end{enumerate*}
    hold, the discrete chirp sequence $c[m]=\dot{c}[m],~\forall m\in\{0,\dots,M-1\}$ satisfies the zero cyclic autocorrelation (ZCA) property:
    \begin{align}\label{eq:zca}
        \sum_{m=0}^{M-1}c[m]c^{*}\bigl[[m+\zeta]_{M}\bigr]=M\delta\bigl[\left[\zeta\right]_M\bigr],
    \end{align}
    for $\zeta\in\mathbb{Z}$.
\end{proposition}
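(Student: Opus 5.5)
The plan is to reduce the cyclic autocorrelation in \eqref{eq:zca} to the linear one of Lemma~\ref{lmm:zla} by showing that, under condition~\ref{enu:cond3}, the infinite sequence $\dot{c}[m]$ is $M$-periodic. Once we know $\dot{c}[m+M]=\dot{c}[m]$ for all $m\in\mathbb{Z}$, we get $c\bigl[[m+\zeta]_M\bigr]=\dot{c}\bigl[[m+\zeta]_M\bigr]=\dot{c}[m+\zeta]$. The left-hand side of \eqref{eq:zca} is then exactly $\sum_{m=0}^{M-1}\dot{c}[m]\dot{c}^*[m+\zeta]$, which equals $M\delta[\zeta]$ by Lemma~\ref{lmm:zla} (conditions~\ref{enu:cond1} and \ref{enu:cond2}).

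The first and main step is to verify periodicity. Write $K=\frac{\epsilon T^2}{M}\in\mathbb{Z}$, so the phase of $\dot{c}[m]$ in \eqref{eq:cdot_m} is $\frac{f_cT}{M}m+\frac{K}{2M}m^2$. The phase increment from $m$ to $m+M$ is
\begin{align}
\frac{f_cT}{M}M+\frac{K}{2M}\left(2mM+M^2\right)=f_cT+Km+\frac{KM}{2}=\left(f_cT+\frac{\epsilon T^2}{2}\right)+Km.
\end{align}
Here $Km\in\mathbb{Z}$ by condition~\ref{enu:cond1}, and $f_cT+\frac{\epsilon T^2}{2}\in\mathbb{Z}$ by condition~\ref{enu:cond3}, since $\frac{KM}{2}=\frac{\epsilon T^2}{2}$. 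Hence $e^{j2\pi(\cdot)}$ of the increment is $1$, and $\dot{c}$ is $M$-periodic on all of $\mathbb{Z}$. The only delicacy is that $\frac{KM}{2}$ may be a half-integer, for instance when $K$ is odd and $M$ is odd (which condition~\ref{enu:cond2} allows). This is precisely why condition~\ref{enu:cond3} is phrased with the combined quantity $f_cT+\frac{\epsilon T^2}{2}$ rather than requiring $f_cT$ and $\frac{\epsilon T^2}{2}$ to be integers separately.

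Finally, we must reconcile the right-hand sides. Lemma~\ref{lmm:zla} gives $M\delta[\zeta]$ for $\zeta\in\mathbb{Z}$, but the cyclic sum is $M$-periodic in $\zeta$. So we apply the lemma to $\zeta'=[\zeta]_M\in\{0,\dots,M-1\}$, after noting that the cyclic sum depends only on $[\zeta]_M$. This gives $M\delta\bigl[[\zeta]_M\bigr]$, as claimed. As a sanity check, the Lemma itself forces $\delta[\zeta]=0$ for nonzero multiples of $M$, while periodicity makes the sum at $\zeta=M$ equal to $M$. There is no contradiction only because Lemma~\ref{lmm:zla}'s periodicity-free statement implicitly concerns $\zeta$ with $|\zeta|<M$. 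I would therefore invoke the lemma only for $\zeta\in\{0,\dots,M-1\}$, and I expect this bookkeeping point to be the only subtle part of the argument.
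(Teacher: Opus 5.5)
Your proposal is correct and follows essentially the same route as the paper: show that condition~(c) makes $\dot{c}[m]$ $M$-periodic, so the cyclic sum collapses to the linear one, and then invoke Lemma~\ref{lmm:zla}. Two refinements over the paper's write-up are worth keeping. Your one-step increment $\bigl(f_cT+\frac{\epsilon T^2}{2}\bigr)+Km$ avoids the paper's general-$\kappa$ exponent $\kappa\bigl(f_cT+\frac{\epsilon T^2}{2}\kappa\bigr)$, whose integrality silently also needs condition~(a). Your restriction to $\zeta\in\{0,\dots,M-1\}$ is genuinely necessary, because Lemma~\ref{lmm:zla} as literally stated fails at nonzero multiples of $M$, where the linear sum has modulus $M$; the paper's proof does not address this.
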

\begin{IEEEproof}
    The proof is given in Appendix \ref{app:proof_zca}.
\end{IEEEproof}
Proposition~\ref{prop:zca} indicates that the discrete chirp sequence $c[m]$ can satisfy the ZCA property in \eqref{eq:zca} by attaching a cyclic prefix (CP) and a cyclic suffix (CS). While this section considers a single chirp, the restriction from linear to cyclic extension is most useful for a chirp train, where each chirp naturally reuses its neighbors as the CP and CS. The next subsection constructs a multi-chirp waveform that exploits the ZCA property to reduce the transmission overhead in time.

\begin{remark}
    Most communication systems can be described by convolution operations, where the channel response and modulation pulse are strictly causal, making it possible to omit CS. Chirp compression, however, relies on correlation, which is a non-causal operation. Therefore, a CS of length $M$ is required to accommodate such non-causality.
\end{remark}

By inspecting the conditions in Proposition \ref{prop:zca}, we can get a trivial set of chirp parameters for all even $M$, given by $f_c=0$ and $\epsilon=\frac{M}{T^2}$. Interestingly, if we substitute these parameters back into the linear FMCW signal $s_c(t)$ in \eqref{eq:sc}, it will result in a delay resolution of $\Delta \tau=\frac{T}{M}$ and a Doppler resolution of $\Delta \nu=\frac{1}{NT}$, which matches the DD resolution of the ODDM signal discussed in Section~\ref{sec:oddm}. Substituting these parameters into \eqref{eq:cdot_m}, we obtain the following corollary.
\begin{corollary}\label{cor:zca}
    The discrete chirp sequence $\boldsymbol{c}\in\mathbb{C}^{M\times1}$ with its $m$-th element given by
    \begin{align}\label{eq:c_m}
        c[m] = e^{j\pi\frac{m^2}{M}}, m\in\{0,\dots,M-1\},
    \end{align}
    has the ZCA property in \eqref{eq:zca} for all even $M$.
\end{corollary}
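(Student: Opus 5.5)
The plan is to derive Corollary~\ref{cor:zca} as a special case of Proposition~\ref{prop:zca}, and then back it up with a direct computation.

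\textbf{Step 1: invoke Proposition~\ref{prop:zca}.} Fix an even $M$ and set $f_c=0$ and $\epsilon=\frac{M}{T^2}$. Then $\frac{\epsilon T^2}{M}=1\in\mathbb{Z}$, which is condition~(a). Also $\gcd(1,M)=1$, which is condition~(b). Finally, $f_cT+\frac{\epsilon T^2}{2}=\frac{M}{2}\in\mathbb{Z}$ precisely because $M$ is even, which is condition~(c). Substituting these values into \eqref{eq:cdot_m} gives $\dot{c}[m]=e^{j2\pi\frac{m^2}{2M}}=e^{j\pi\frac{m^2}{M}}$, and this coincides with \eqref{eq:c_m} for $m\in\{0,\dots,M-1\}$. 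Proposition~\ref{prop:zca} then yields \eqref{eq:zca} at once.

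\textbf{Step 2: direct verification.} For a self-contained check, I would first show that $c[m]$ extended by the formula $e^{j\pi m^2/M}$ is $M$-periodic. Indeed,
\[
e^{j\pi (m+M)^2/M}=e^{j\pi m^2/M}\,e^{j2\pi m}\,e^{j\pi M}=e^{j\pi m^2/M},
\]
because $M$ is even. This periodicity is the one place where evenness is genuinely needed, and I expect it to be the only real obstacle. It allows replacing $c^*\bigl[[m+\zeta]_M\bigr]$ by $e^{-j\pi(m+\zeta)^2/M}$. The cyclic sum then becomes
\[
e^{-j\pi\zeta^2/M}\sum_{m=0}^{M-1}e^{-j2\pi m\zeta/M}.
\]
This geometric sum equals $M$ when $[\zeta]_M=0$ and vanishes otherwise. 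When $[\zeta]_M=0$, the prefactor is $1$, since $\zeta^2/M=M k^2$ is even for $\zeta=kM$. This establishes \eqref{eq:zca}.
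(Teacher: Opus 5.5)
Your Step 1 is the paper's argument: the corollary follows by substituting $f_c=0$ and $\epsilon=\frac{M}{T^2}$ into Proposition~\ref{prop:zca}, where evenness of $M$ is what secures condition~(c), and your check that \eqref{eq:cdot_m} then reduces to $e^{j\pi m^2/M}$ is correct. Your Step 2 is also a valid self-contained verification, and it treats $\zeta\in M\mathbb{Z}$ explicitly, which the paper's route through Lemma~\ref{lmm:zla} glosses over (that lemma is stated with $M\delta[\zeta]$ for all $\zeta\in\mathbb{Z}$ rather than $M\delta\bigl[[\zeta]_M\bigr]$).
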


We then use $\boldsymbol{c}$ to construct the SRN-filtered chirp
\begin{align}\label{eq:ca}
    c_{a}(t)=\sum_{m=0}^{M-1}c[m]a\left(t-m\frac{T}{M}\right).
\end{align}
Through the usage of SRN pulse $a(t)$, $c_a(t)$ is naturally compatible with baseband modulation schemes, including ODDM. More details on the integration with ODDM will be discussed in Section~\ref{sec:dd_rcc_fmcw}. On the other hand, according to the ZCA property in Proposition \ref{prop:zca}, we can cyclically extend $c_a(t)$ to achieve a concise form of the delay ambiguity function. Thus, we construct the cyclically extended SRN-filtered chirp as
\begin{align}
    c_{a,ce}(t) = \sum_{n=-1}^{1}c_{a}(t-nT),
\end{align}
where a CP $c_{a}(t+T)$ and a CS $c_{a}(t-T)$ are attached. Then, the crosscorrelation between $c_{a,ce}(t)$ and $c_a(t)$ is computed as
\begin{align}
    A_{c_{a,ce},c_a}(\tau)&\triangleq\int_{-\infty}^\infty c_{a,ce}(t) c_a^*(t-\tau)dt
    \nonumber\\
    &=M\int_{-\infty}^\infty a(t) a^*(t-\tau)dt=Mg(\tau),
    \label{eq:A_ca_approx}
\end{align}
for delay shifts $\tau\in(-T,T)$. That is, $c_a(t)$ has a delay ambiguity function in the shape of the Nyquist pulse $g(\tau)$.

An example of $A_{c_a}(\tau)$ (green curve) is plotted in Fig.~\ref{fig:chirp_compression}, where we use a square-root-raised-cosine (SRRC) pulse for $a(t)$. Compared to $A_{c}(\tau)$ of a practical linear chirp, the sidelobes of $A_{c_a}(\tau)$ are significantly lower through the usage of the SRRC pulse. In addition, $g(\tau)$ naturally satisfies the Nyquist ISI criterion, facilitating ISAC waveform design.

\subsection{SRN-Filtered FMCW}\label{sec:rrc_fmcw}

To further achieve Doppler resolution, we use $c_{a}(t)$ as the radar pulse and define the baseband SRN-FMCW signal as
\begin{align}\label{eq:s_c}
    s_{c_a}(t) = \sum_{n=0}^{N-1}c_{a}(t-nT).
\end{align}
Notably, $s_{c_a}(t)$ can be seen as a pulse-Doppler radar signal with a $100\%$ duty cycle. Except for the first ($n=0$) and last ($n=N-1$) SRN-filtered chirp pulses, the $(n-1)$-th and $(n+1)$-th pulses naturally work as the cyclic extension of the $n$-th pulse $c_{a}(t-nT)$. As a result, only a framewise CS and CP are required, yielding
\begin{align}\label{eq:s_cce}
    s_{c_a,ce}(t) = \sum_{n=-1}^{N}c_{a}(t-nT).
\end{align}
An example of the transmitted sequence of the SRN-FMCW signal $s_{c_a,ce}(t)$ including a CP and a CS is given in the middle of Fig.~\ref{fig:cpcs}, where $l_{\max}\geq\tau_{\max}\frac{M}{T}$ is the CP length. The green rectangles represent random ODDM data symbols, whereas the blue rectangles---shown in varying shades---represent the discrete chirp sequence with a deterministic phase sweep.

\begin{figure*}[ht]
    \centering
    \includegraphics[width=1.7\columnwidth,trim={0 0 0 0},clip]{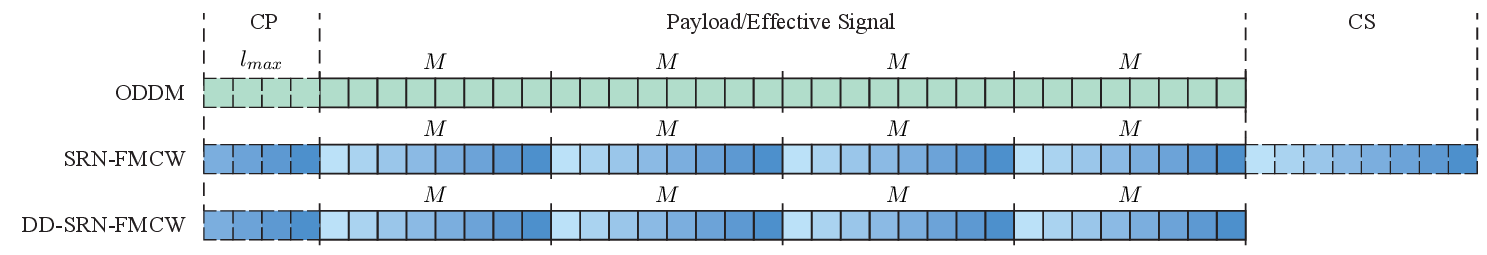}
    \caption{Time-domain transmitted sequences: ODDM, SRN-FMCW, and DD-SRN-FMCW. ($M=8,N=4$)}
    \label{fig:cpcs}
\end{figure*}

We now demonstrate how the DDR of a doubly-selective channel can be obtained using SRN-FMCW. Substituting \eqref{eq:s_c} to \eqref{eq:s2r_baseband}, the received SRN-FMCW signal can be expressed as
\begin{align}
    r_{c_a}(t) = \sum_{p=1}^{P} h_p  \sum_{\mathclap{n=-1}}^{N}c_{a}(t\!-\!\tau_p\!-\!nT) e^{j2\pi \nu_p(t-\tau_p)} + z(t).
\end{align}
Then, $c_a(t)$-based chirp compression is performed on $r_{c_a}(t)$. After critical sampling and reshaping, the \emph{radar data matrix} $\boldsymbol{R}_{c_a}\in\mathbb{C}^{M\times N}$ is obtained with $M$ fast-time bins and $N$ slow-time bins, whose $(m,n)$-th element is given by
\begin{align}\label{eq:R_ca}
    &R_{c_a}[m,n]=\sum_{p=0}^{P-1}h_p e^{j2\pi\frac{k_pn}{N}} \sum_{d=0}^{2Q} g\left(\left(d+\floor{l_p}-l_p\right)\frac{T}{M}\right) 
    \nonumber\\
    &\;\:\times\!\!\sum_{\tilde{m}=0}^{M-1} e^{j2\pi\frac{(\tilde{m}\!-\!l_p\!-\!d)k_p}{MN}} c\bigl[[\tilde{m}\!-\!m]_M\bigr]c\bigl[[\tilde{m}\!-\!\floor{l_p}\!-\!d]_{M}\bigr],
\end{align}
which is essentially the fast-slow time response of the channel. Finally, the channel DDR is computed by a row-wise DFT: $\boldsymbol{D}_{c_a}=\boldsymbol{R}_{c_a}\boldsymbol{F}_N$, with its $(\dot{m},n)$-th element given by
\begin{align}\label{eq:D_ca}
    &D_{c_a}[\dot{m},n] = \sum_{p=1}^{P} h_p\phi(k_p-n)\sum_{d=0}^{2Q} g\left((d+\floor{l_p}-l_p)\frac{T}{M}\right)
    \nonumber\\
    &\;\:\times\!\!\sum_{m=0}^{M-1}\!e^{j2\pi\frac{(m\!-\!l_p\!-\!d)k_p}{MN}} c\bigl[[\tilde{m}\!-\!\dot{m}]_M\bigr]c\bigl[[\tilde{m}\!-\!\floor{l_p}\!-\!d]_{M}\bigr].
\end{align}
By inspecting \eqref{eq:D_ca}, we can see that each path's response exhibits a Nyquist-shaped profile $g(\cdot)$ in delay and a Dirichlet-kernel-shaped profile $\phi(\cdot)$ in Doppler.

\subsection{DD-Domain Embedded SRN-Filtered FMCW}\label{sec:dd_rcc_fmcw}

This section presents DD-SRN-FMCW, an efficient DD-domain implementation of SRN-FMCW, to achieve compatibility with ODDM. We note that the SRN-FMCW signal $s_{c_a}(t)$ in \eqref{eq:s_c} is equivalent to the time-domain representation of an ODDM frame $\boldsymbol{X}_c\in\mathbb{C}^{M\times N}$, with its $(m,n)$-th symbol being
\begin{align}\label{eq:Xc=c}
    X_c[m,n] = 
    \begin{cases} 
        \sqrt{N E_c} c[m], & n=0,
        \\
        0, & \text{otherwise},
    \end{cases}
\end{align}
where we introduced $E_c=\mathbb{E}_t\left[s_{c_a}(t)\right]$ to represent the time-domain chirp power. That is, $s_{c_a}(t)$ in \eqref{eq:s_c} can be generated by embedding symbols in the DD domain as \eqref{eq:Xc=c}, followed by ODDM transmitter operations in Section~\ref{sec:oddm_tx}. We refer to $\boldsymbol{X}_c$ as the DD-SRN-FMCW frame.

To further integrate DD-SRN-FMCW with the ODDM receiver, we now introduce DD chirp compression. Let $\boldsymbol{Y}_c\in\mathbb{C}^{M\times N}$ denote the received DD-SRN-FMCW frame after the ODDM transmitter, channel, and receiver operations. Substituting \eqref{eq:Xc=c} into \eqref{eq:io_dd}, the $(m,n)$-th element of $\boldsymbol{Y}_c$ is
\begin{align}\label{eq:Y_c}
    Y_c[m,&n] = \sum_{p=1}^{P} h_p\phi(k_p-n)\sum_{d=0}^{2Q}e^{j2\pi\frac{(m-l_p-d)k_p}{MN}}
    \nonumber\\
    &\times g\left((d+\floor{l_p}-l_p)\frac{T}{M}\right) c\left[\left[m-\floor{l_p}-d\right]_{M}\right].
\end{align}
Then, chirp compression is performed in the DD domain as
\begin{align}\label{eq:ddr}
    D_c[\dot{m},n] \triangleq \sum_{\mathclap{m=0}}^{M-1} c\bigl[[m-\dot{m}]_M\bigr] Y_c[m,n].
\end{align}
Direct algebra shows that \eqref{eq:ddr} reduces to \eqref{eq:D_ca}, yielding the same DDR expression.

Compared to the direct $c_a(t)$-based chirp compression in Section~\ref{sec:rrc_fmcw}, the DD chirp compression by \eqref{eq:Y_c} and \eqref{eq:ddr} is decomposed into two cascaded stages: first matched filtering with the SRN pulse $a(t)$, then correlation with the discrete chirp sequence $\boldsymbol{c}$. As the first stage is part of ODDM receiver operations, DD chirp compression can be readily implemented using the ODDM receiver, followed by an $M$-point correlator.

Furthermore, as the correlation step is performed in the DD domain, we can exploit the DD quasi-periodicity: when the Doppler shift is absent ($k_p = 0$), samples from the $n=0$ Doppler column are periodic along the delay dimension, as seen from \eqref{eq:io_dd}. Since the cyclicity for $\boldsymbol{c}$ is guaranteed, we can \emph{waive the time-domain CS} in DD-SRN-FMCW to better align with ODDM. From the DD pilot design perspective, this also means the cyclic orthogonality of $\boldsymbol{c}$ can be exploited without a dedicated cyclic extension, making DD-SRN-FMCW different from other orthogonal-sequence-based pilots \cite{Shi2021MIMO-OTFS_DetPilot,Bondre2024DFRC-OTFS_RootMUSIC,Ma2025OTFS_LowPAPRPilots}. The transmitted sequences of ODDM, SRN-FMCW, and DD-SRN-FMCW are compared in Fig.~\ref{fig:cpcs}. DD-SRN-FMCW requires only a CP and aligns with the ODDM frame structure.

\section{Proposed ODDM-FMCW System for ISAC}\label{sec:oddm_fmcw}

In this section, we propose ODDM-FMCW for ISAC. We first introduce the ODDM-FMCW waveform by embedding DD-SRN-FMCW into an ODDM frame alongside data symbols. Then, we discuss the receiver signal processing schemes.

\subsection{ODDM-FMCW Waveform}\label{sec:oddm_fmcw_signal}

\begin{figure}[t]
    \centering
    \includegraphics[width=0.8\columnwidth,trim={-30 0 0 0},clip]{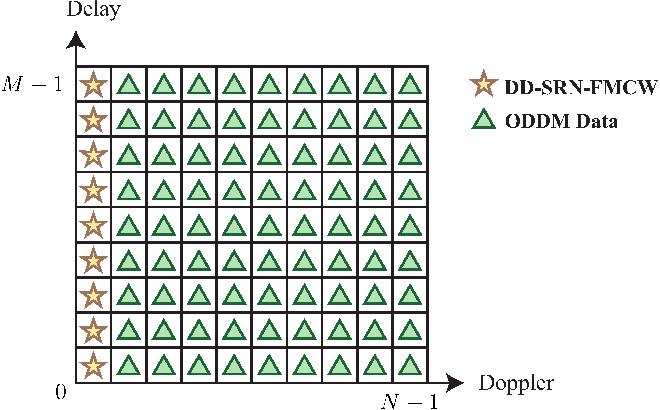}
    \caption{DD frame structure for ODDM-FMCW signals.}
    \label{fig:frame_oddmfmcw}
\end{figure}

To support ISAC operation in Fig.~\ref{fig:isac_chan}, the ODDM-FMCW frame $\boldsymbol{X}$ is constructed by superimposing a DD-SRN-FMCW frame $\boldsymbol{X}_c$ onto an ODDM data frame $\boldsymbol{X}_d\in\mathcal{A}^{M\times N}$, i.e.,
\begin{align}\label{eq:X}
    \boldsymbol{X}=\boldsymbol{X}_c+\boldsymbol{X}_d,
\end{align}
where each element of $\boldsymbol{X}_d$ is drawn from a constellation set $\mathcal{A}$ for data modulation. Fig.~\ref{fig:frame_oddmfmcw} illustrates the ODDM-FMCW frame structure. According to \eqref{eq:Xc=c}, the $n=0$ Doppler column is occupied by $\boldsymbol{X}_c$, used as the pilot for both channel estimation and sensing. The remaining columns are filled with data symbols $\boldsymbol{X}_d$, used for communication. No guard interval is allocated between DD-SRN-FMCW symbols and ODDM data symbols to avoid spectral efficiency loss. 

Given the linear transformation between DD domain and time domain, \eqref{eq:X} corresponds to a time-domain superimposed signal
\begin{align}\label{eq:s=scsd}
    s(t)=s_{c_a}(t)+s_d(t),
\end{align}
where $s_{c_a}(t)$ is the DD-SRN-FMCW signal and $s_d(t)$ is the ODDM data signal. We define the chirp-data-power-ratio (CDPR) as $\rho=E_c/E_s$, with chirp power $E_c=\mathbb{E}_t\left[s_{c_a}(t)\right]$ and data power $E_s=\mathbb{E}_t\left[s_d(t)\right]$. By tuning $\rho$, one can trade off sensing accuracy against communication BER.

The ODDM transmitter operations are performed on $\boldsymbol{X}$ to generate the baseband ODDM-FMCW signal. After propagation through the ISAC channel and ODDM receiver operations, the received frame $\boldsymbol{Y}$ is obtained by \eqref{eq:io_dd}. Given that no guard interval is employed, iterative interference cancellation is employed at both the collocated sensing receiver and the communication receiver. For the collocated sensing receiver, a DAS algorithm is used to enable super-resolution sensing. For the communication receiver, a JCEDD algorithm is used to perform data detection. In the following sections, we detail these algorithms one after the other.

\subsection{Data-Aided Sensing at Collocated Sensing Receiver}\label{sec:sensing}

DAS is performed at the collocated sensing receiver to estimate the DD parameters of the channel paths. We improve the OMP algorithm \cite{Rasheed2020OTFS_MU_OMP} with grid evolution for low-complexity and super-resolution sensing. The algorithm successively estimates the parameters $\left\{\hat{h}_p,\hat{l}_p,\hat{k}_p\right\}$ for path $p\in\{1,\dots,P_{\max}\}$ with $P_{\max}$ being a predefined maximum path number, where grid evolution is performed as an inner loop to refine the DD resolution. The signal components from the previously estimated paths are subtracted from the received signal to cancel their interference.

Let $\Delta\boldsymbol{Y}_{p}$ denote the path cancellation residual before the estimation of the $p$-th path, computed by
\begin{align}\label{eq:DeltaY}
    \Delta\boldsymbol{Y}_{p}\!=\boldsymbol{Y}-\sum_{p'=1}^{p-1}\!\hat{h}_{p'} \boldsymbol{a}\!\left(\hat{l}_{p'},\hat{k}_{p'}\right).
\end{align}
We have $\Delta\boldsymbol{Y}_{1}=\boldsymbol{Y}$ before estimating the first path. Due to the high processing gain of the DD-SRN-FMCW component, we can compute the residual DDR $\Delta\boldsymbol{D}_p$ by substituting $\Delta\boldsymbol{Y}_{p}$ to \eqref{eq:ddr}. The highest response in $\Delta\boldsymbol{D}_p$ gives the integer DD shifts $\left\{\hat{l}_p^{(0)},\hat{k}_p^{(0)}\right\}$. Then, we perform OMP on a virtual grid $\Lambda$ with evolving DD resolution. In particular, for the $\ell$-th grid evolution, a matching pursuit is performed as
\begin{align}
    \argmax_{\hat{l}_p^{(\ell)},\hat{k}_p^{(\ell)}\in\Lambda^{(\ell)}}\frac{\left|\boldsymbol{a}\!\left(\hat{l}_p^{(\ell)},\hat{k}_p^{(\ell)}\right)^\herm\!\mathrm{vec}\left(\Delta\boldsymbol{Y}_p\right)\right|^2}{\left\lVert\boldsymbol{a}\!\left(\hat{l}_p^{(\ell)},\hat{k}_p^{(\ell)}\right)\right\rVert^2},
    \label{eq:omp}
\end{align}where $\boldsymbol{a}\!\left(\hat{l}_p,\hat{k}_p\right)\in\mathbb{C}^{MN\times1}$ denotes the atom, i.e., the noiseless response produced by a unit-gain path at delay $\hat{l}_p$ and Doppler $\hat{k}_p$, computed by \eqref{eq:Y_c}. The virtual grid $\Lambda^{(\ell)}$ for the $\ell$-th evolution is defined to be centered at the last-evolution estimates $\left\{\hat{l}_p^{(\ell-1)},\hat{k}_p^{(\ell-1)}\right\}$ with a halved DD resolution. After evolving the virtual grid $\mathcal{L}$ times, the final estimation $\left\{\hat{l}_p,\hat{k}_p\right\}=\left\{\hat{l}_p^{(\mathcal{L})},\hat{k}_p^{(\mathcal{L})}\right\}$ lies on a refined DD grid $\Lambda^{(\mathcal{L})}$ with spacings reduced by a factor of $2^{-\mathcal{L}}$.

After refining the DD estimates of the $p$-th path by grid evolution, the coefficients of all previously estimated paths are updated by least squares\cite{Rasheed2020OTFS_MU_OMP}:
\begin{align}
    &\left[\hat{h}_1,\dots,\hat{h}_p\right]^\tran 
    \nonumber\\
    &\ \ \qquad{=} \left[\boldsymbol{a}\!\left(\hat{l}_1^{(\mathcal{L})},\hat{k}_1^{(\mathcal{L})}\right),\dots,\boldsymbol{a}\!\left(\hat{l}_p^{(\mathcal{L})},\hat{k}_p^{(\mathcal{L})}\right)\right]^\dagger\!\mathrm{vec}\left(\boldsymbol{Y}\right).
    \label{eq:omp_ls}
\end{align}
The successive path estimation continues until $P_{\max}$ is reached or $\norm{\Delta\boldsymbol{Y}_{p}}^2$ stops decreasing, and we denote the resulting number of estimated paths by $\hat{P}\leq P_{\max}$. We summarize the OMP algorithm with grid evolution in Algorithm~\ref{alg:omp_ge}.

\begin{algorithm}[t]
  \caption{OMP with Grid Evolution}
  \label{alg:omp_ge}
  \begin{algorithmic}[1]
    \Input $\boldsymbol{Y},\boldsymbol{c},P_{\max}$.
    \Initialize $p\gets0$.
    \Repeat
        \State $p \gets p+1$.
        \State Obatin $\Delta\boldsymbol{Y}_{p}$ by \eqref{eq:DeltaY}.
        \State Obatin $\Delta\boldsymbol{D}_p$ by substituting $\Delta\boldsymbol{Y}_{p}$ to \eqref{eq:ddr}.
        \State Obatin $\left\{\hat{l}_p^{(0)},\hat{k}_p^{(0)}\right\}$ by the highest response in $\Delta\boldsymbol{D}_p$.
        \For{$\ell=1,\ldots,\mathcal{L}$}
            \State Construct $\Lambda^{\ell}$ based on $\left\{\hat{l}_p^{(\ell-1)},\hat{k}_p^{(\ell-1)}\right\}$.
            \State Obatin $\left\{\hat{l}_p^{(\ell)},\hat{k}_p^{(\ell)}\right\}$ by \eqref{eq:omp}.
        \EndFor
        \State $\hat{l}_p\gets \hat{l}_p^{(\mathcal{L})}$, $\hat{k}_p\gets \hat{k}_p^{(\mathcal{L})}$.
        \State Update $\left\{\hat{h}_1,\dots,\hat{h}_p\right\}$ by \eqref{eq:omp_ls}.
    \Until{$p = P_{\max}$ or $\norm{\Delta\boldsymbol{Y}_{p+1}}^2 \ge \norm{\Delta\boldsymbol{Y}_{p}}^2$.}
    \State $\hat{P} \gets p$.
    \Output $\left\{\hat{h}_p,\hat{l}_p,\hat{k}_p\right\}$ for $p\in\{1,\dots,\hat{P}\}$.
  \end{algorithmic}
\end{algorithm}

\begin{algorithm}[t]
  \caption{Data-Aided Sensing}
  \label{alg:das}
  \begin{algorithmic}[1]
    \Input $\boldsymbol{Y},\boldsymbol{c},\boldsymbol{X}_d,P_{\max}$.
    \Initialize $\hat{\boldsymbol{Y}}_c \gets \boldsymbol{Y}$.
    \For{$i=1,\ldots,I_\text{DAS}$}
        \State $i \gets i+1$.
        \State Estimate the channel by applying Algorithm~\ref{alg:omp_ge} to $\hat{\boldsymbol{Y}}_c$.
        \State Estimate $\hat{\boldsymbol{Y}}_d$ by substituting $\boldsymbol{X}_d$ to \eqref{eq:io_dd}.
        \State $\hat{\boldsymbol{Y}}_c \gets \boldsymbol{Y} - \hat{\boldsymbol{Y}}_d$.
    \EndFor
    \Output $\left\{\hat{h}_p,\hat{l}_p,\hat{k}_p\right\}$ for $p\in\{1,\dots,\hat{P}\}$.
  \end{algorithmic}
\end{algorithm}

The OMP algorithm described above treats superimposed data symbols as noise. However, since the data symbols are known to the collocated sensing receiver, the channel distorted component of $\boldsymbol{X}_d$ can be computed by \eqref{eq:Y_c} using the estimated channel parameters and subtracted from $\boldsymbol{Y}$ to reduce interference from data. After that, OMP with path cancellation is employed again to achieve better DD estimates. This process is iterated to progressively refine the estimation, as summarized in Algorithm~\ref{alg:das}.

Here we derive the complexity order of the DAS algorithm, which operates by iteratively cancelling the data component and refining the channel parameters via an OMP-based subroutine. Within each DAS iteration, the OMP-based subroutine estimates the channel parameters path by path. For each path, coarse delay and Doppler estimates are first obtained by locating the maximum of the residual DDR $\Delta\boldsymbol{D}_p$ in \eqref{eq:ddr}. The circular correlation in \eqref{eq:ddr} has an efficient FFT-based implementation, which has a complexity of $\mathcal{O}(NM\log_2M)$. Then, matching pursuit as in \eqref{eq:omp} is performed to refine the estimates with the grid evolving $\mathcal{L}$ times, incurring a complexity of $\mathcal{O}(\mathcal{L}MN)$. After estimating the delay and Doppler of the $p$-th path, the coefficients of the first $p$ paths are updated by \eqref{eq:omp_ls} with complexity $\mathcal{O}(p^3+MNp^2)$, which sums to $\mathcal{O}(\hat{P}^4+MN\hat{P}^3)$ over all $\hat{P}$ paths. Therefore, the total complexity of the proposed DAS algorithm is $\mathcal{O}\bigl(I_\mathrm{DAS}\bigl(\hat{P}NM\bigl(\log_2M+\mathcal{L}+\hat{P}^2\bigr)+\hat{P}^4\bigr)\bigr)$. Notably, $\hat{P}$ is typically small in sparse doubly-selective channels. The DDR-based coarse estimation significantly shrinks the candidate atom dictionary, while the subsequent grid evolution dynamically refines this compact dictionary to achieve high estimation accuracy without incurring excessive computational complexity.

\subsection{Joint Channel Estimation and Data Detection at Communication Receiver}\label{sec:jcedd}

Since both $\boldsymbol{X}_d$ and CSI are unknown to the communication receiver, we perform JCEDD by combining the modified OMP algorithm with the soft SIC-MMSE detector \cite{Li2024SICMMSE_Turbo}, which offers superior BER compared to other common detectors like message passing and iterative maximum ratio combining \cite{Huang2024ODDM_Performance}.

We first brief the soft SIC-MMSE detector assuming perfect CSI, which performs low-complexity MMSE equalization using sub-channel matrices. In the original SIC-MMSE algorithm \cite{Li2024SICMMSE_Turbo}, each time-domain sample $s[q]$ is associated with a fixed-length spreading vector $\boldsymbol{g}_q'\in\mathbb{C}^{(l_{\max}+2Q+1)}$ to capture its channel response. Then, a sub-channel matrix $\boldsymbol{G}_q'\in\mathbb{C}^{(l_{\max}+2Q+1)\times(2l_{\max}+4Q+3)}$ is constructed around $\boldsymbol{g}_q'$ to cover the channel response relevant to $s[q]$ and the interfering samples. The relation between $\boldsymbol{g}_q'$, $\boldsymbol{G}_q'$, and the full channel matrix $\boldsymbol{G}$ is illustrated by the top two graphs in Fig.~\ref{fig:subio}.

\begin{figure}[t]
    \centering
    \includegraphics[width=0.8\columnwidth,trim={0 0 0 0},clip]{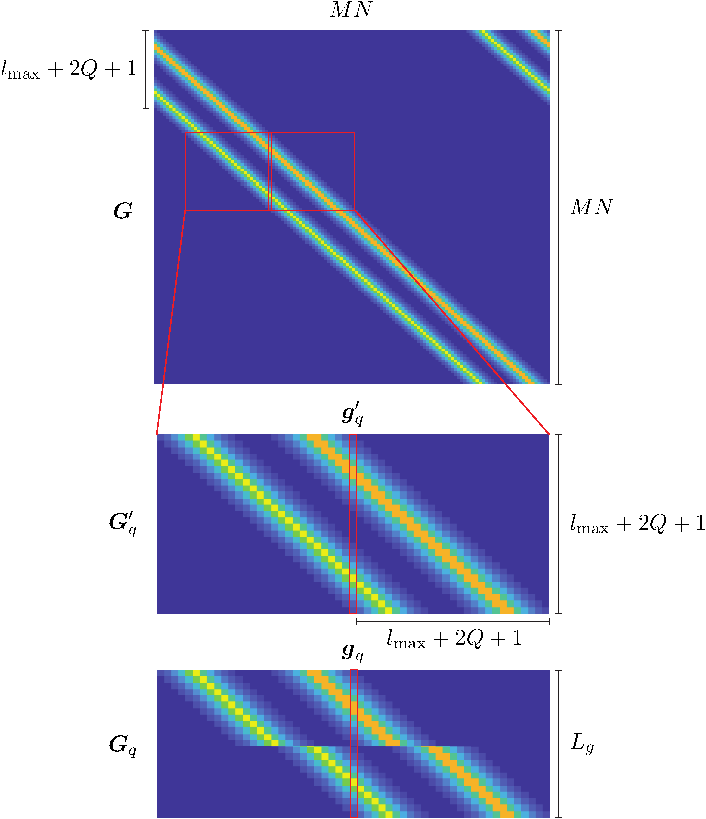}
    \caption{Construction of a sub-channel matrix for SIC-MMSE.}
    \label{fig:subio}
\end{figure}

In this paper, we use a modified sub-channel matrix construction method to adaptively reduce the size of $\boldsymbol{G}_q'$. In particular, we collapse the zero elements in $\boldsymbol{g}_q'$ to obtain a dense spreading vector $\boldsymbol{g}_q\in\mathbb{C}^{L_g}$, where $L_g\le\min\{l_{\max}+2Q+1,P(2Q+1)\}$ is a constant representing the number of non-zero elements in $\boldsymbol{g}_q'$. The corresponding sub-channel matrix $\boldsymbol{G}_q\in\mathbb{C}^{L_g\times(2l_{\max}+4Q+3)}$ is constructed with height $L_g$. This modification is illustrated by the bottom graph in Fig.~\ref{fig:subio}. It can be inferred that the size reduction to $L_g$ is pronounced for sparse channels with large delay spreads, which is precisely the regime considered in our ISAC model.

Based on $\boldsymbol{G}_q$, we can define a sub-input-output relation for each $s[q]$ to \emph{only} include the channel distorted components of $s[q]$ itself and the interfering samples, allowing MMSE equalization with small sub-channel matrices. In practical implementation, interference cancellation is used to efficiently express the sub-input-output relation \cite{Li2024SICMMSE_Turbo}. Let $\Delta\boldsymbol{r}=\boldsymbol{r}-\boldsymbol{G}\hat{\boldsymbol{s}}$ denote the residual interference plus noise vector, where $\hat{\boldsymbol{s}}$ is the a priori mean of the transmitted sequence $\boldsymbol{s}$. For each sample $\hat{s}[q]$, we express the estimated received vector as $\tilde{\boldsymbol{r}}_q = \Delta\boldsymbol{r}_q + \boldsymbol{g}_q \hat{s}[q]$, with $\Delta\boldsymbol{r}_q$ being the portion of $\Delta\boldsymbol{r}$ corresponding to $\boldsymbol{g}_q$. Then, sample-by-sample soft MMSE equalization can be performed as $\tilde{s}[q] = \boldsymbol{g}_q^\herm\left(\boldsymbol{G}_q\boldsymbol{V}_q\boldsymbol{G}_q^\herm+\sigma_z^2\right)^{-1} \tilde{\boldsymbol{r}}_q$, where $\boldsymbol{V}_q$ is the corresponding a priori error covariance.

After MMSE equalization, cross-domain symbol-wise maximum likelihood (ML) detection is performed on $\tilde{s}[q]$ to get the a posteriori estimate. The soft SIC-MMSE detector is run iteratively, using a posteriori estimates from the last iteration as a priori inputs for the next, thereby improving detection accuracy. The overall SIC-MMSE algorithm is summarized in Algorithm \ref{alg:sicmmse}, where $\eta_\mathcal{A}(\cdot)$ denotes the soft-decision ML detector over constellation set $\mathcal{A}$.

\begin{algorithm}[t]
  \caption{Soft SIC-MMSE with Perfect CSI}
  \label{alg:sicmmse}
  \begin{algorithmic}[1]
    \Input $\boldsymbol{Y},\boldsymbol{G},\sigma_z^2,\mathcal{A}$.
    \Initialize $\hat{\boldsymbol{X}}\gets\boldsymbol{0}^{M\times N},\hat{\boldsymbol{s}}\gets\boldsymbol{0}^{MN}$.
    \For{$i=1,\ldots,I_\mathrm{DET}$}
        \State $i \gets i+1$.
        \For{$m=0,\ldots,M-1$}
            \State $\Delta\boldsymbol{r}\gets \boldsymbol{r}-\boldsymbol{G}\hat{\boldsymbol{s}}$.
            \State $\tilde{\boldsymbol{s}}_m\gets\boldsymbol{0}^{N}$.
            \For{$\dot{n}=0,\ldots,N-1$}
                \State $q\gets \dot{n}M+m$.
                \State Extract $\boldsymbol{g}_q$ and $\boldsymbol{G}_q$ from $\boldsymbol{G}$.
                \State $\tilde{\boldsymbol{r}}_q \gets \Delta\boldsymbol{r}_q + \boldsymbol{g}_q \hat{s}[q]$. 
                \State $\tilde{s}_m[\dot{n}] \gets \boldsymbol{g}_q^\herm\left(\boldsymbol{G}_q\boldsymbol{V}_q\boldsymbol{G}_q^\herm+\sigma_z^2\right)^{-1} \tilde{\boldsymbol{r}}_q$.
            \EndFor
            \State $\tilde{\boldsymbol{x}}_m\gets\boldsymbol{F}_N \tilde{\boldsymbol{s}}_m$.
            \For{$n=0,\ldots,N-1$}
                \State $\hat{X}[m,n]\gets\eta_\mathcal{A}(\tilde{x}_m[n])$.
            \EndFor
            \State $\left\{\hat{s}[\dot{n}M+m]\right\}_{\dot{n}=0}^{N-1}\gets\boldsymbol{F}_N^\herm \hat{\boldsymbol{X}}[m,:]$.
        \EndFor
    \EndFor
    \Output $\hat{\boldsymbol{X}}$.
  \end{algorithmic}
\end{algorithm}

When the channel parameters are unknown, we integrate Algorithm \ref{alg:omp_ge} for channel estimation with Algorithm \ref{alg:sicmmse} for data detection to perform JCEDD for ODDM-FMCW. The resulting JCEDD algorithm is summarized in Algorithm \ref{alg:jcedd}. Given the a priori estimated frame $\hat{\boldsymbol{X}}$, we first separate the received pilot component $\hat{\boldsymbol{Y}}_c$ by treating the channel distorted components of the detected data as known interference. The channel parameters are then estimated by applying Algorithm \ref{alg:omp_ge} to $\hat{\boldsymbol{Y}}_c$. Based on the estimated channel, we construct the effective channel matrix $\boldsymbol{G}$ and perform data detection using the soft SIC-MMSE detector, improving $\hat{\boldsymbol{X}}$. This process is iterated to progressively refine both channel estimates and data estimates. Notably, the SIC-MMSE module in JCEDD differs from Algorithm \ref{alg:sicmmse} in that the symbol-wise ML update of $\hat{\boldsymbol{X}}$ excludes the known DD-SRN-FMCW symbols at $n=0$, as indicated in line~20 of Algorithm~\ref{alg:jcedd}.

\begin{algorithm}[t]
  \caption{Joint Channel Estimation and Data Detection}
  \label{alg:jcedd}
  \begin{algorithmic}[1]
    \Input $\boldsymbol{Y},\boldsymbol{X}_c,\sigma_z^2,\mathcal{A}$.
    \Initialize $\hat{\boldsymbol{X}}_d\gets\boldsymbol{0}^{M\times N},\hat{\boldsymbol{X}}\gets\boldsymbol{X}_c,\hat{\boldsymbol{s}}\gets\mathrm{vec}\left(\hat{\boldsymbol{X}}\boldsymbol{F}_N^\herm\right)$.
    \For{$i=1,\ldots,I_\text{JCEDD}$}
        \State $i \gets i+1$.
        \State Estimate $\hat{\boldsymbol{Y}}_d$ by substituting $\hat{\boldsymbol{X}}_d$ to \eqref{eq:io_dd}.
        \State $\hat{\boldsymbol{Y}}_c\gets\boldsymbol{Y}-\hat{\boldsymbol{Y}}_d$.
        \State Estimate the channel by applying Algorithm~\ref{alg:omp_ge} to $\hat{\boldsymbol{Y}}_c$.
        \State Construct $\boldsymbol{G}$ based on the estimated channel.
        \For{$m=0,\ldots,M-1$}
            \State $\Delta\boldsymbol{r}\gets \boldsymbol{r}-\boldsymbol{G}\hat{\boldsymbol{s}}$.
            \State $\tilde{\boldsymbol{s}}_m\gets\boldsymbol{0}^{N}$.
            \For{$\dot{n}=0,\ldots,N-1$}
                \State $q\gets \dot{n}M+m$.
                \State Extract $\boldsymbol{g}_q$ and $\boldsymbol{G}_q$ from $\boldsymbol{G}$.
                \State $\tilde{\boldsymbol{r}}_q \gets \Delta\boldsymbol{r}_q + \boldsymbol{g}_q \hat{s}[q]$. 
                \State $\tilde{s}_m[\dot{n}] \gets \boldsymbol{g}_q^\herm\left(\boldsymbol{G}_q\boldsymbol{V}_q\boldsymbol{G}_q^\herm+\sigma_z^2\right)^{-1} \tilde{\boldsymbol{r}}_q$.
            \EndFor
            \State $\tilde{\boldsymbol{x}}_m\gets\boldsymbol{F}_N \tilde{\boldsymbol{s}}_m$.
            \For{$n=1,\ldots,N-1$}
                \State $\hat{X}[m,n]\gets\eta_\mathcal{A}(\tilde{x}_m[n])$.
            \EndFor
            \State $\left\{\hat{s}[\dot{n}M+m]\right\}_{\dot{n}=0}^{N-1}\gets\boldsymbol{F}_N^\herm \hat{\boldsymbol{X}}[m,:]$.
        \EndFor
        \State $\hat{\boldsymbol{X}}_d \gets \hat{\boldsymbol{X}} - \boldsymbol{X}_c$.
    \EndFor
    \Output $\hat{\boldsymbol{X}}_d$.
  \end{algorithmic}
\end{algorithm}

We now analyze the complexity order of the JCEDD algorithm in Algorithm~\ref{alg:jcedd}, which can be viewed as a soft SIC-MMSE detector (Algorithm~\ref{alg:sicmmse}) where each iteration invokes the OMP-based channel estimator in Algorithm~\ref{alg:omp_ge}. The complexity order of Algorithm~\ref{alg:omp_ge} was derived in Section \ref{sec:sensing}. Meanwhile, the complexity of a soft SIC-MMSE detector is $\mathcal{O}\left(I_\mathrm{DET}MN\left(\log_2N+L_g^3\right)\right)$ \cite{Huang2024ODDM_Performance,Li2024SICMMSE_Turbo}. Therefore, the total complexity of the proposed JCEDD algorithm is $\mathcal{O}\bigl(I_\mathrm{JCEDD}\bigl(MN\bigl(\log_2N{+}L_g^3{+}\hat{P}\log_2M{+}\mathcal{L}{+}\hat{P}^2\bigr){+}\hat{P}^4\bigr)\bigr)$.

Notably, the proposed ISAC receiver architecture adopts OMP with grid evolution for sensing/channel estimation and soft SIC-MMSE for data detection because this combination offers a favorable performance-complexity tradeoff for doubly selective channels with large delay spreads. OMP with grid evolution achieves super-resolution sensing with complexity that grows linearly in the DD frame size $MN$ and only polynomially in the typically small number of dominant paths $\hat{P}$, thereby avoiding the cubic-in-$MN$ cost of sparse Bayesian learning \cite{Wei2022SBL,Shan2025ODDM_ChanEst_GridRefinement} and subspace-based \cite{Bondre2024DFRC-OTFS_RootMUSIC} estimators. On the communication side, the soft SIC-MMSE detector has been shown to outperform message passing and iterative maximum ratio combining \cite{Huang2024ODDM_Performance}, while remaining far less complex than orthogonal approximate message passing \cite{Ma2017OAMP}, whose complexity scales cubically in $MN$. These properties make the OMP/SIC-MMSE pair a natural choice as the core building blocks of the proposed DAS and JCEDD frameworks.

\section{Performance Analysis}\label{sec:performance_analysis}

In this section, we evaluate the performance of the proposed ODDM-FMCW system. We first examine its PAPR and spectral characteristics. We then derive the ambiguity function and CRBs for delay and Doppler estimation to quantify the sensing performance of DD-SRN-FMCW.

\subsection{Peak-to-Average Power Ratio (PAPR)}
We first aim to analyze the PAPR of the ODDM-FMCW signal $s(t)$, defined as $\gamma = \frac{\max\abs{s(t)}^2}{\mathbb{E}_{t}\left[\abs{s(t)}^2\right]}$. Given \eqref{eq:s=scsd}, the average power of $s(t)$ is $\mathbb{E}_{t}\left[\abs{s(t)}^2\right] = E_c+E_s$.

Since $\gamma$ depends heavily on the specific realization of the data component $s_d(t)$, the statistical distribution of $\gamma$ is of interest, often characterized by the complementary cumulative distribution function (CCDF) $p(\gamma>\gamma_0)$ with $\gamma_0$ being some threshold \cite{Surabhi2019PAPR_OTFS,Rahmatallah2013PAPR_Reduction_OFDM_Survey,Jiang2008PAPR_OFDM_EVT}. As $N$ increases, the central limit theorem implies that the samples of $s_d(t)$ converge to a pointwise complex Gaussian distribution \cite{Surabhi2019PAPR_OTFS,Rahmatallah2013PAPR_Reduction_OFDM_Survey,Jiang2008PAPR_OFDM_EVT}. On the other hand, the deterministic DD-SRN-FMCW signal $s_{c_a}(t)$, generated from the constant-amplitude $\boldsymbol{c}$, retains a near-constant envelope. Therefore, we assume $s(t){\sim}\mathcal{CN}\left(s_{c_a}(t),E_s\right)$ so that $\abs{s(t)}$ is Rician distributed \cite{Tse2005Fundamentals_WC,Richards2022Fundamentals_RadarSignalProcessing}. Equivalently, $\abs{s(t)}^2/\mathbb{E}_{t}\left[\abs{s(t)}^2\right]$ is a non-central $\chi^2$ process with two degrees of freedom, whose tail probability is a function of the CDPR $\rho$:
\begin{align}\label{eq:tail_oddmfmcw}
    p\left(\frac{\abs{s(t)}^2}{\mathbb{E}_{t}\left[\abs{s(t)}^2\right]}{>}\gamma_0\right)=Q_1\left(\sqrt{2\rho},\sqrt{2\gamma_0\left(1\!+\!\rho\right)}\right),
\end{align}
with $Q_1(\cdot,\cdot)$ being the Marcum Q-function of order one. Finally, $p(\gamma\!>\!\gamma_0)$ can be approximated by considering $MN$ i.i.d. samples of $s(t)$:\footnote{
    As $s(t)$ is bandlimited, its samples are in fact correlated, so the finite-dimensional distribution of $\abs{s(t)}^2/\mathbb{E}_{t}\left[\abs{s(t)}^2\right]$ does \emph{not} converge to that of a jointly $\chi^2$ process. More accurate approximation is non-trivial \cite{Jiang2008PAPR_OFDM_EVT} and lies beyond the scope of this paper.
}
\begin{align}\label{eq:ccdf_oddmfmcw}
    p(\gamma\!>\!\gamma_0)\approx 1\!-\!\left(\!1\!-\!Q_1\!\left(\!\sqrt{2\rho},\sqrt{2\gamma_0\left(1\!+\!\rho\right)}\right)\!\!\right)^{\!\!MN}.
\end{align}

The analytical CCDF in \eqref{eq:ccdf_oddmfmcw} is plotted in Fig.~\ref{fig:ccdf_analysis} using the parameters in Section~\ref{sec:numerical}, where the simulated CCDF is also shown for comparison. It can be observed that the analytical CCDF matches the simulation results well. In addition, with an increased CDPR $\rho$, ODDM-FMCW can have a lower PAPR cutoff value. This is because $p(\gamma\!>\!\gamma_0)$ in \eqref{eq:ccdf_oddmfmcw} is a monotonically decreasing function of $\rho$ provided that $Q_1(\cdot,\cdot)$ decreases with its second argument and increases only with the square-root of its first argument \cite{Marcum1960StatTheory}. \emph{Therefore, increasing the DD-SRN-FMCW component improves the PAPR of ODDM-FMCW.} As will be further shown in Section~\ref{sec:numerical}, the PAPR of ODDM-FMCW is lower than that of pure ODDM data signal and significantly lower than that of ODDM-DDIP.

\begin{figure}[t]
    \centering
    \includegraphics[width=0.96\columnwidth,trim={0 0 0 0},clip]{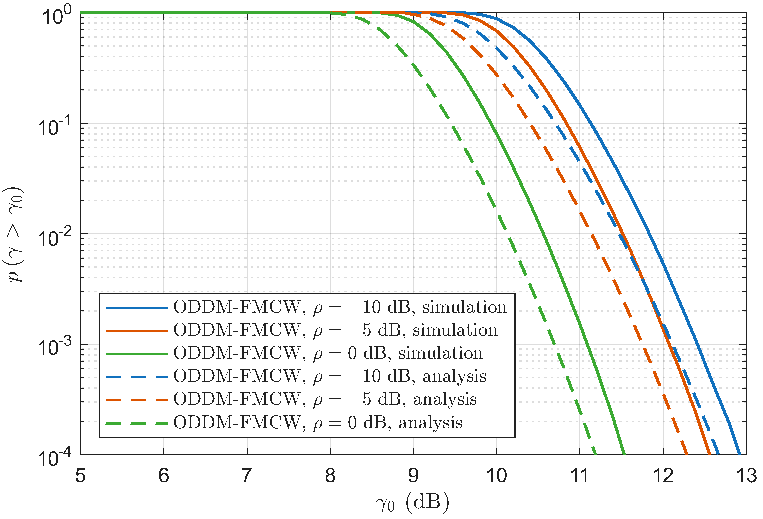}
    \caption{CCDF of ODDM-FMCW signals.}
    \label{fig:ccdf_analysis}
\end{figure}

\subsection{Spectrum}\label{sec:spectrum}

Now we derive the PSD of the ODDM-FMCW signal $s(t)$. The frequency response of $s(t)$ can be written as
\begin{align}\label{eq:S_f}
    S(f)=\sum_{m=0}^{M-1}\sum_{n=0}^{N-1} X[m,n] B_{m,n}(f),
\end{align}
where $B_{m,n}(f) = \sqrt{N} A(f) e^{-j2\pi f\frac{T}{M}m} \phi(n-NTf)$ is the frequency response of the ODDM basis function \cite{Shafie2025ODDM_Spectrum} and $A(f)$ is the frequency response of the SRN pulse $a(t)$ \cite{Shafie2024DDOP_TF_Loc}.

Substituting $\boldsymbol{X}_c$ to \eqref{eq:S_f}, the PSD of the DD-SRN-FMCW component $s_{c_a}(t)$ is derived as
\begin{align}\label{eq:psd_sca}
    \abs{S_{c_a}(f)}^2 = E_c N \varpi(f) \abs{A(f)}^2 \abs{\phi(-NTf)}^2,
\end{align}
where $\varpi(f)=\sum_{m_1=0}^{M-1}\sum_{m_2=0}^{M-1}c[m_1]c^*[m_2] e^{-j2\pi f\frac{(m_1-m_2)T}{M}}$. Similarly, the average PSD of the data component $s_d(t)$ is
\begin{align}\label{eq:psd_sd}
    \mathbb{E}[\abs{S_d(f)}^2] = E_s NM\abs{A(f)}^2\sum_{n=1}^{N-1} \abs{\phi(n-NTf)}^2.
\end{align}
Finally, the average PSD of ODDM-FMCW is given by
\begin{align}
    \mathbb{E}[\abs{S(f)}^2]=\abs{S_{c_a}(f)}^2+\mathbb{E}[\abs{S_d(f)}^2].
\end{align}
By \eqref{eq:psd_sca} and \eqref{eq:psd_sd}, it can be oberved that $S_{c_a}(f)$ is modulated onto the $n=0$ Doppler tone $\phi(-NTf)$, while $S_d(f)$ is modulated onto the other Doppler tones $\phi(n-NTf)$ for $n=1,\dots,N-1$. Both $\abs{S_{c_a}(f)}^2$ and $\mathbb{E}[\abs{S_d(f)}^2]$ have an envelope $\abs{A(f)}^2$, which can be leveraged for OOBE control. 

In Fig.~\ref{fig:psd_fmcw}, we compare the PSD of DD-SRN-FMCW with that of the linear FMCW signal $s_c(t)$ in \eqref{eq:sc}, denoted by $\abs{S_c(f)}^2$. A SRRC pulse is used as $a(t)$. The PSDs for SRN-filtered chirp and linear chirp are also plotted, denoted by $\abs{C_a(f)}^2$ and $\abs{C(f)}^2$, respectively. All the signals use a nominal bandwidth of $\frac{M}{T}=\SI{480}{\kilo\hertz}$, based on $M=32$ and $T=\SI{66.67}{\micro\second}$. Linear chirp and the linear FMCW signal consider the chirp parameters used by Corollary \ref{cor:zca}. Both $\abs{S_c(f)}^2$ and $\abs{S_{c_a}(f)}^2$ use a pulse train of length $N=4$. With the repetition of pulses, both $\abs{S_c(f)}^2$ and $\abs{S_{c_a}(f)}^2$ are modulated onto the $n=0$ Doppler tone $\phi(-NTf)$. Due to the slowly attenuating chirp spectrum $\abs{C(f)}^2$, the linear FMCW spectrum $\abs{S_c(f)}^2$ shows a high OOBE. On the other hand, through the usage of the SRRC pulse, the $\abs{A(f)}^2$ envelope of $\abs{C_a(f)}^2$ and $\abs{S_{c_a}(f)}^2$ provides significantly reduced OOBE.

\begin{figure}[t]
    \centering
    \includegraphics[width=0.95\columnwidth,trim={0 0 0 0},clip]{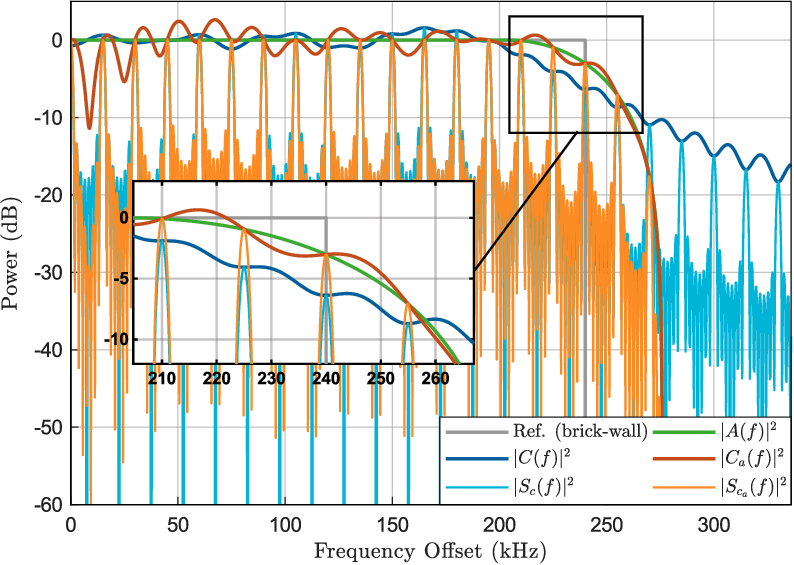}
    \caption{PSD of different sensing signals ($M=32,N=8$).}
    \label{fig:psd_fmcw}
\end{figure}

\subsection{Ambiguity Function}\label{sec:af_analysis}
To evaluate the sensing capability of the DD-SRN-FMCW signal $s_{c_a}(t)$, we present its ambiguity function. Following the discussion in Section~\ref{sec:linearfmcw2rrcfmcw}, we consider the cyclic extended signal $s_{c_a,ce}(t)$ in \eqref{eq:s_cce} as the transmitted signal. Then, the cross-ambiguity function between $s_{c_a,ce}(t)$ and $s_{c_a}(t)$ is
\begin{align}\label{eq:af_sce_sc_1}
    A(\tau,&\nu) \triangleq \int_{-\infty}^\infty s_{c_a,ce}(t)s_{c_a}^*(t-\tau)e^{-j2\pi\nu(t-\tau)}dt
    \nonumber\\
    &\!\!\!\!\!= N\phi(-\nu NT)\sum_{\mathclap{\zeta=0}}^{M-1}\,\sum_{m=0}^{M-1} e^{-j2\pi\nu\frac{T}{M}m} c[m] c^*\bigl[[m+\zeta]_{M}\bigr] 
    \nonumber\\
    &\!\!\!\times \int_{-\infty}^\infty a\left(t\right)  a^*\left(t-\zeta \frac{T}{M}-\tau\right)e^{j2\pi\nu (\tau-t)} dt,
\end{align}
where the proof is omitted for brevity. Fig.~\ref{fig:af_all} shows an example of $A(\tau,\nu)$ with a SRRC pulse used for $a(t)$. A sharp mainlobe and well-controlled sidelobes are observed, confirming the sensing capability of $s_{c_a}(t)$.

Since the SRN pulse $a(t)$ is well-localized in time, we can make the linear approximation $\int a(t)a^*(t-\tau) e^{-j2\pi\nu t}dt\approx g(\tau)$ following \cite{Lin2022OrthogonalModulation,Tong2024ODDM_PhyChan}. Then, \eqref{eq:af_sce_sc_1} becomes
\begin{align}
    A(\tau,\nu) &\approx e^{j2\pi\nu\tau} N\phi(-\nu NT) \sum_{\zeta} g\left(\tau+\zeta\frac{T}{M}\right)
    \nonumber\\
    &\ \times\sum_{m=0}^{M-1} e^{-j2\pi\nu\frac{T}{M}m} c[m] c^*\left[[m+\zeta]_{M}\right].
    \label{eq:af_sce_sc_2}
\end{align}
 As a special case of \eqref{eq:af_sce_sc_2}, the zero-Doppler cut $A(\tau,0) \approx M g\left(\tau\right)$ was given in \eqref{eq:A_ca_approx}. Similarly, the zero-delay cut can be obtained as $A(0,\nu) \approx N\phi(-\nu NT)$. Notably, $A(\tau,0)$ and $A(0,\nu)$ match those of the ODDM pulse \cite{Tong2024ODDM_PhyChan}, further highlighting the DD-SRN-FMCW waveform's consistency and suitability for ODDM-based ISAC systems.

\begin{figure}[t]
    \centering
    \includegraphics[width=0.9\columnwidth,trim={14 4 25 20},clip]{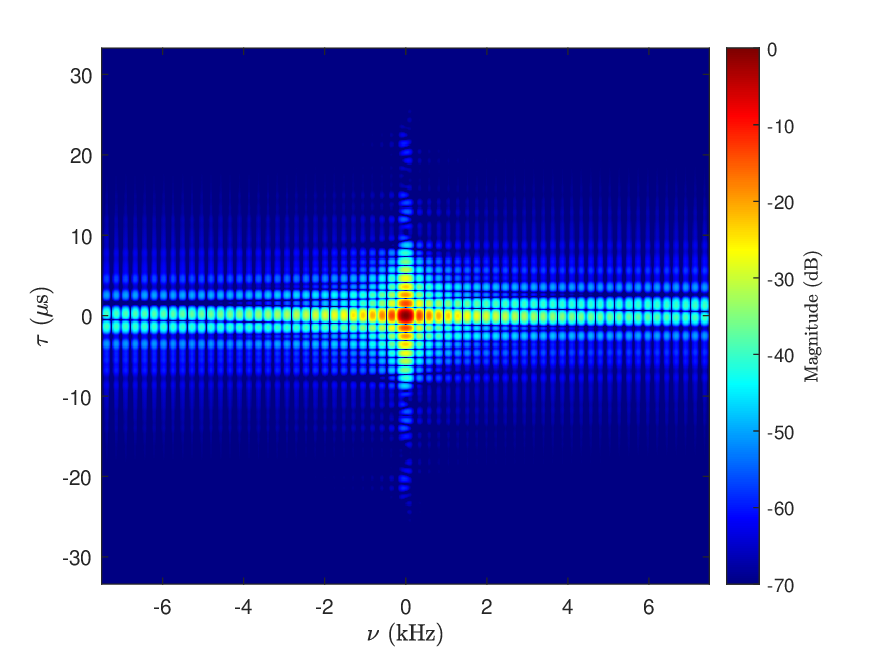}
    \caption{Ambiguity function $A(\tau,\nu)$ of the DD-SRN-FMCW signal ($M=N=64$).}
    \label{fig:af_all}
\end{figure}

\begin{figure*}[t]
    \centering
    \subfloat[DDIP.]{\label{fig:af_impulse}\includegraphics[width=0.32\textwidth,trim={14 4 25 20},clip]{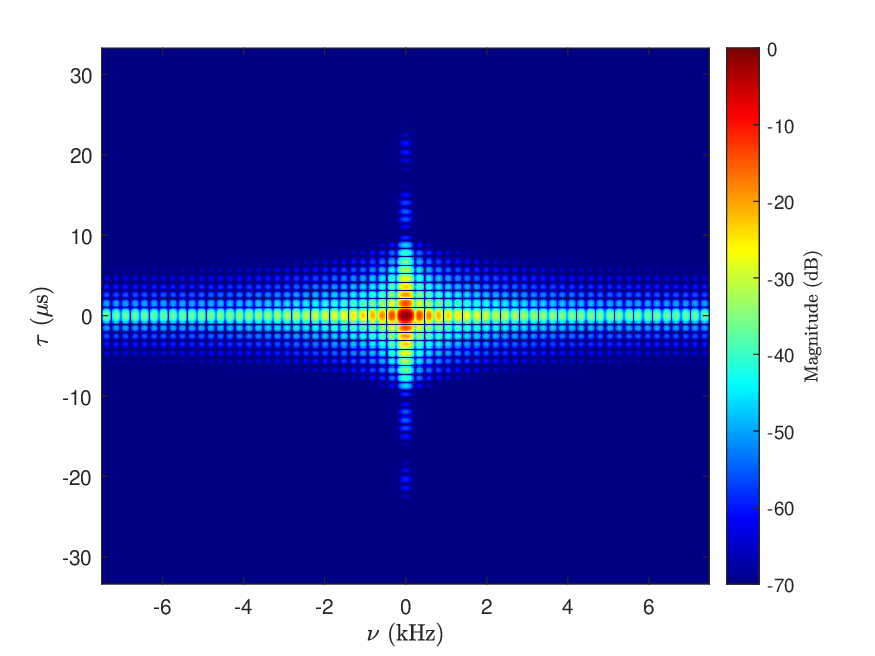}}
    \hfill
    \subfloat[Linear FMCW.]{\label{fig:af_fmcw}\includegraphics[width=0.32\textwidth,trim={14 4 25 20},clip]{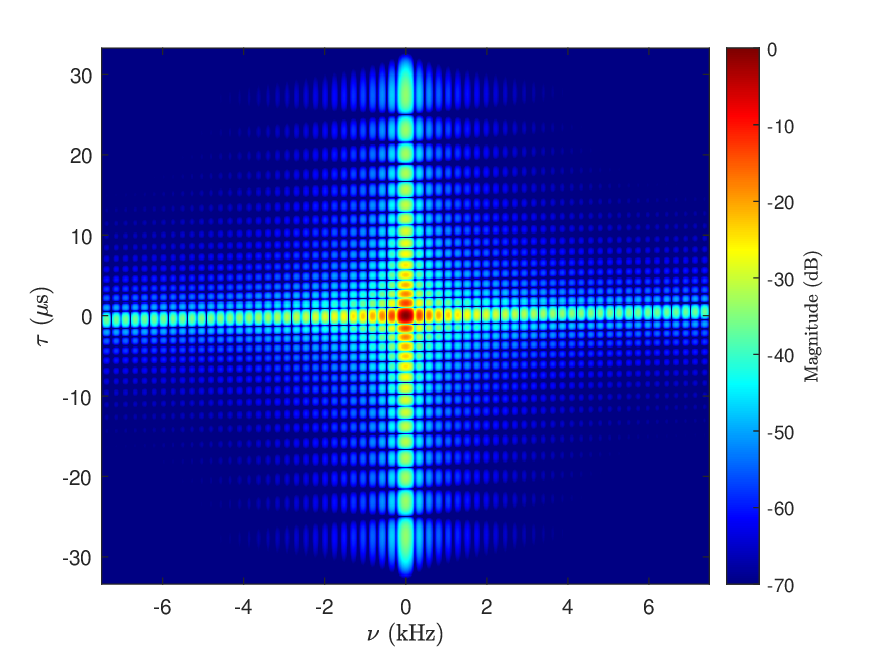}}
    \hfill
    \subfloat[2D chirp \cite{Ubadah2024ZakOTFS_ISAC}.]{\label{fig:af_chirp2d}\includegraphics[width=0.32\textwidth,trim={14 4 25 20},clip]{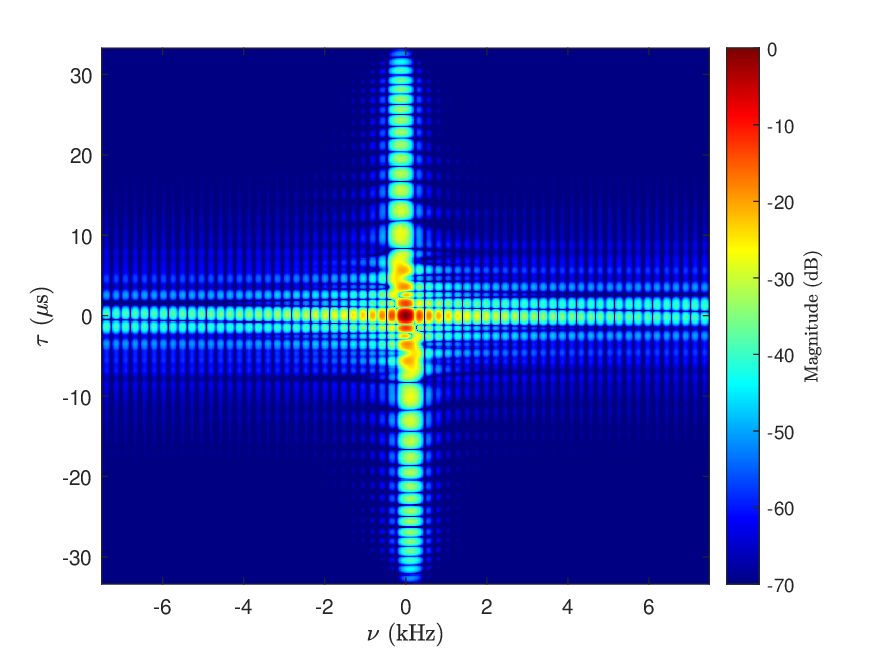}}
    \caption{Ambiguity function of different sensing signals ($M=N=64$).}
    \label{fig:af_comp}
\end{figure*}

For comparison, the ambiguity functions of three representative sensing signals are shown in Fig.~\ref{fig:af_comp}. The plot in Fig.~\ref{fig:af_comp}\subref{fig:af_impulse} shows the benchmark ambiguity function of a pulse-Doppler radar signal (also the DDIP in ODDM/OTFS \cite{Raviteja2019EmbeddedChannels,Tong2024ODDM_PhyChan}), which features uncoupled mainlobe and the lowest sidelobes. With a significantly lower PAPR, our proposed DD-SRN-FMCW signal preserves the ambiguity profile of DDIP with slightly higher off-diagonal sidelobes below \SI{-30}{\decibel}. By contrast, the linear FMCW signal in Fig.~\ref{fig:af_comp}\subref{fig:af_fmcw} also exhibits a good DD resolution, yet the 2D ambiguity plane is overlaid with a tilted lattice produced by range-Doppler coupling. In addition, as discussed in Section~\ref{sec:chirp_compression}, pronounced Fresnel-ripple sidelobes can be observed in the delay direction. To avoid coupling, the recently proposed 2D chirp signal \cite{Ubadah2024ZakOTFS_ISAC} leverages a baseband strategy similar to DD-SRN-FMCW. The ambiguity function of a 2D chirp signal is plotted in Fig.~\ref{fig:af_comp}\subref{fig:af_chirp2d}, where a slope parameter of $32$ is applied to achieve the desired mainlobe width. Although the 2D chirp signal has a low level of sidelobes in the Doppler direction, it exhibits a slanted sidelobe floor in the delay direction. In addition, it requires a significantly higher receiver complexity since a 2D correlator is required.

Compared to all three benchmarks in Fig.~\ref{fig:af_comp}, the proposed DD-SRN-FMCW waveform offers the most attractive compromise for an ISAC transceiver: it matches the low-PAPR advantage of FMCW, eliminates range-Doppler coupling, guarantees a deterministic low-sidelobe level, and remains fully compatible with baseband processing.

\subsection{Cramér-Rao Bound}\label{sec:crb}
Finally, we derive the CRB of the DD-SRN-FMCW signal to evaluate its superresolution sensing performance. We first derive CRB for the general ODDM frame $\boldsymbol{X}$. Denote the Fisher information matrix of $\boldsymbol{X}$ by $\boldsymbol{\mathcal{F}}(\boldsymbol{\theta,\boldsymbol{X}})\in\mathbb{R}^{4P\times 4P}$, where $\boldsymbol{\theta}=\left\{\abs{h_p},\angle h_p,l_p,k_p\middle|p=1,\dots,P\right\}$. The $i,j$-th element of $\boldsymbol{\mathcal{F}}(\boldsymbol{\theta,\boldsymbol{X}})$ is \cite{Gaudio2020ISACEffectiveness_OTFS}
\begin{align}
    \mathcal{F}_{i,j}(\boldsymbol{\theta,\!\boldsymbol{X}}){=}\frac{2}{\sigma_z^2}\Re\!\left\{\!\sum_{m=0}^{M-1}\!\sum_{n=0}^{N-1}\!\!\left(\!\frac{\partial Y[m,\!n]}{\partial\theta_{i}}\!\right)^{\!\!*}\!\frac{\partial Y[m,\!n]}{\partial\theta_{j}} \middle|\boldsymbol{\theta}\!\right\}.
\end{align}
Then, the MSE bound of the $i$-th parameter $\hat{\theta}_i$ is given by
\begin{align}\label{eq:crb}
    \mse\Bigl(\hat{\theta}_i\Bigr) \geq (\boldsymbol{\mathcal{F}}^{-1}(\boldsymbol{\theta,\boldsymbol{X}}))_{i,i}.
\end{align}

The partial derivatives w.r.t. $\abs{h_p}$ and $\angle h_p$ are readily given as $\frac{\partial Y[m,n]}{\partial \abs{h_p}} = e^{j\angle h_p} Y_p[m,n]$ and $\frac{\partial Y[m,n]}{\partial \angle h_p} = jh_p Y_p[m,n]$, respectively. $\frac{\partial Y[m,n]}{\partial l_p}$ and $\frac{\partial Y[m,n]}{\partial k_p}$ are more complicated with their closed-form espressions given in Appendix \ref{app:dldk}. The CRB for the DD-SRN-FMCW signal is then available by substituting $\boldsymbol{X}_c$ into \eqref{eq:crb}.

\begin{figure}[ht]
    \centering
    \subfloat[CRB of delay estimation $\hat{\tau}$.]{\label{fig:crb_tau}\includegraphics[width=0.5\columnwidth,trim={0 0 0 0},clip]{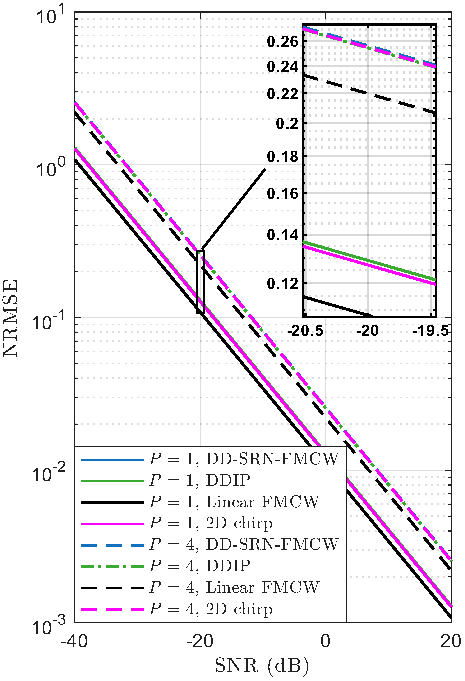}}
    \hfill
    \subfloat[CRB of Doppler estimation $\hat{\nu}$.]{\label{fig:crb_nu}\includegraphics[width=0.5\columnwidth,trim={0 0 0 0},clip]{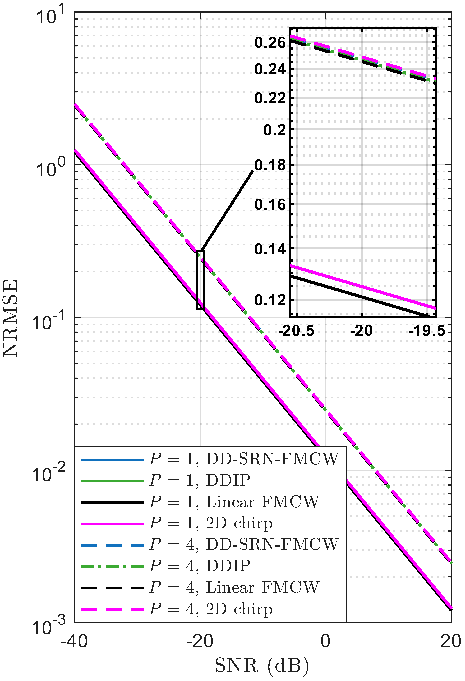}}
    \caption{CRB comparison for DD-SRN-FMCW, DDIP, linear FMCW, and 2D chirp.}
    \label{fig:crb_all}
\end{figure}

The CRBs for delay and Doppler estimation are plotted in NRMSE form and compared with those of other sensing signals in Fig.~\ref{fig:crb_all}. It is observed that the CRB of the DD-SRN-FMCW signal is comparable to that of other signals.

\section{Numerical Results}\label{sec:numerical}

In this section, we present the numerical results for the ISAC performance of the proposed ODDM-FMCW system. For the doubly-selective channel, we assume $P=4$ resolvable paths with uniformly distributed delay and Doppler shifts, where the maximum delay shift is \SI{33.36}{\micro\second} and the maximum Doppler shift is \SI{5003.46}{\hertz}. These values correspond to a maximum detectable range of \SI{5}{\kilo\meter} and a maximum radial velocity of \SI{150}{\metre/\second} (\SI{540}{\kilo\meter/\hour}) in a monostatic sensing configuration. For the transmitted signal, we consider 4-QAM modulation and a carrier frequency of $f_c=\SI{5}{\giga\hertz}$ with a nominal bandwidth of $\mathcal{B}=\frac{M}{T}= \SI{3.84}{\mega\hertz}$ and a frame duration of $NT=\SI{4.27}{\milli\second}$, based on $T=\SI{66.67}{\micro\second}$, $M=256$, and $N=64$. To reduce OOBE, a SRRC pulse is used as $a(t)$, having a roll-off factor of $\beta=0.15$. For the ISAC receivers, the DAS algorithm iterates $4$ times while the JCEDD algorithm iterates $8$ times. The internal OMP algorithm of both DAS and JCEDD algorithms performs $\mathcal{L}=10$ times grid evolution, yielding a superresolution factor of $2^{-\mathcal{L}}\approx0.001$.

We first present the CCDF of PAPR for DD-SRN-FMCW and ODDM-FMCW in Fig.~\ref{fig:ccdf_simulation}, compared with the DDIP signal \cite{Raviteja2019EmbeddedChannels} and ODDM with superimposed DDIP (ODDM-DDIP) \cite{Yuan2021DataAided_ChanEst}. For a fair comparison, no data is placed alongside the DDIP pilot on the $n=0$ Doppler column in ODDM-DDIP, matching the ODDM-FMCW frame structure in Fig.~\ref{fig:frame_oddmfmcw}. The energy of DDIP is also scaled to match the total energy of the corresponding DD-SRN-FMCW signal $\boldsymbol{X}_c$. DD-SRN-FMCW shows a superior PAPR cutoff value at around \SI{3}{\decibel}. Benefitting from the DD-SRN-FMCW signal component, the ODDM-FMCW signal enjoys a lower PAPR compared to the pure ODDM data signal. As $\rho$ increases, the DD-SRN-FMCW signal component becomes more dominant, leading to a further decrease in PAPR. In contrast, ODDM-DDIP exhibits a significantly higher PAPR, which is due to the DDIP component. As $\rho$ increases, the PAPR of the ODDM-DDIP signal increases rapidly, reaching beyond \SI{20}{\decibel} at $\rho=\SI{-5}{\decibel}$. Therefore, the ODDM-FMCW signal always achieves a significant PAPR reduction compared to the ODDM-DDIP signal.

\begin{figure}[t]
    \centering
    \includegraphics[width=0.95\columnwidth,trim={0 0 0 0},clip]{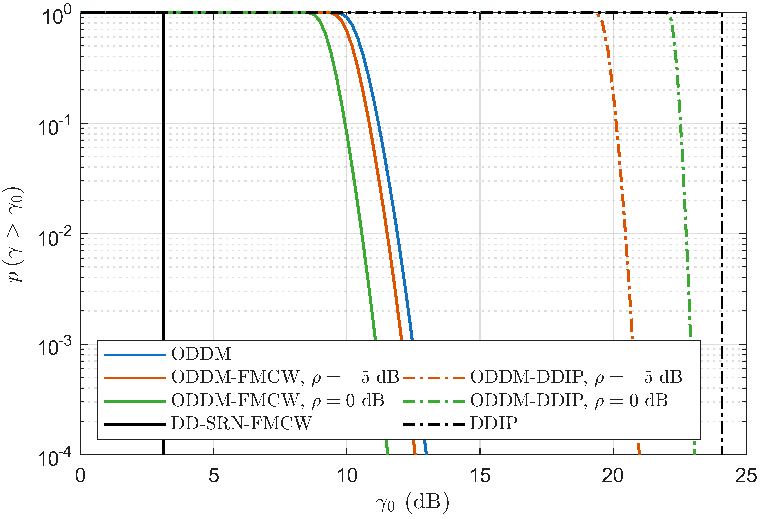}
    \caption{Simulated CCDF of different signals.}
    \label{fig:ccdf_simulation}
\end{figure}

In Fig.~\ref{fig:ber_vs_esn0}, we compare the BER performance of ODDM-FMCW and ODDM-DDIP at the communication receiver using the same JCEDD algorithm in Section~\ref{sec:jcedd}. We recall that JCEDD uses the known pilot component to estimate the channel and recover the data component. As a benchmark, ODDM with perfect CSI is also simulated using the soft SIC-MMSE detector \cite{Li2024SICMMSE_Turbo}. It is observed that ODDM-FMCW consistently outperforms ODDM-DDIP, achieving a more significant gain at high $E_s/N_0$ values. Moreover, with a CDPR of $\rho=\SI{-8}{\decibel}$, the BER performance of ODDM-FMCW with JCEDD closely approaches that of ODDM with perfect CSI.

\begin{figure}[t]
    \centering
    \includegraphics[width=0.95\columnwidth,trim={0 0 0 0},clip]{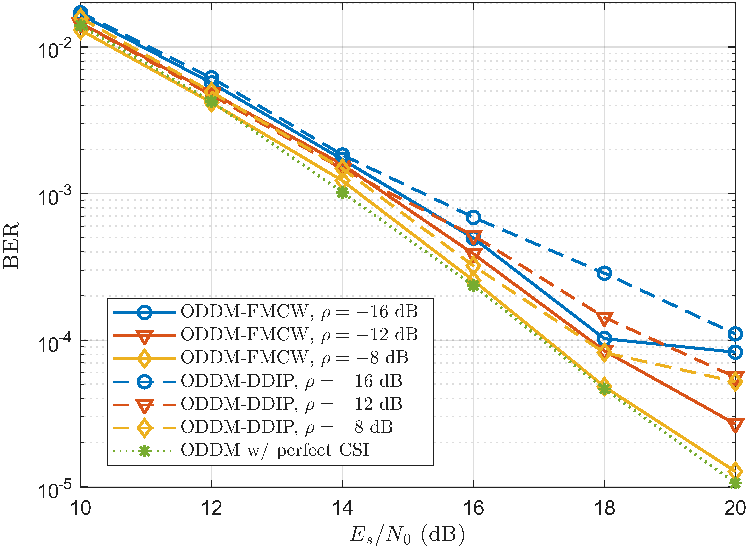}
    \caption{BER at the communication receiver. ODDM-FMCW and ODDM-DDIP use JCEDD; ODDM w/ perfect CSI uses soft SIC-MMSE \cite{Li2024SICMMSE_Turbo}.}
    \label{fig:ber_vs_esn0}
\end{figure}

To explain the BER gap between ODDM-FMCW and ODDM-DDIP, we examine the normalized mean square error (NMSE) of the channel estimation step within the JCEDD algorithm in Fig.~\ref{fig:nmse_jcedd}. Following \cite{Yuan2021DataAided_ChanEst}, we define the NMSE with respect to the equivalent response of a virtual unity-energy DDIP signal. The NMSEs obtained by transmitting pure DD-SRN-FMCW and DDIP serve as practical benchmarks, as they are unaffected by data-induced interference. Their corresponding CRBs (red solid and dashed curves) are also plotted for reference. Both DD-SRN-FMCW and DDIP achieve similar NMSE, consistent with the ambiguity function analysis in Sections \ref{sec:af_analysis} and the CRB analysis in \ref{sec:crb}. Moreover, the simulated NMSEs of DD-SRN-FMCW and DDIP closely track the CRBs, indicating the effectiveness of OMP with grid evolution for channel estimation. However, with superimposed data symbols, ODDM-DDIP exhibits a higher channel estimation NMSE than ODDM-FMCW. This is because the energy of the DDIP signal is concentrated in the DD domain, causing more severe interference to adjacent data symbols. The erroneous data detection negatively affects interference cancellation and, in turn, degrades channel estimation in subsequent JCEDD iterations. In contrast, the interference from the DD-SRN-FMCW pilot is distributed over the DD grid, thereby enhancing the robustness of ODDM-FMCW against channel estimation errors. This gives ODDM-FMCW a higher detection gain over ODDM-DDIP across the iterations of JCEDD.

\begin{figure}[t]
    \centering
    \includegraphics[width=0.95\columnwidth,trim={0 0 0 0},clip]{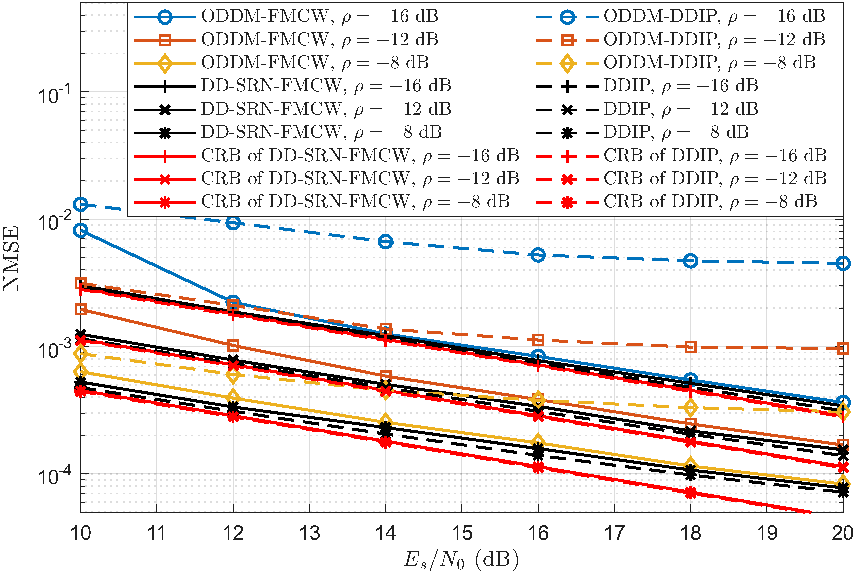}
    \caption{NMSE of channel estimation at the communication receiver. ODDM-FMCW and ODDM-DDIP use JCEDD; DD-SRN-FMCW and DDIP use OMP \cite{Rasheed2020OTFS_MU_OMP}.}
    \label{fig:nmse_jcedd}
\end{figure}

Then, we evaluate the NRMSE of delay and Doppler estimation at the collocated sensing receiver using the DAS algorithm in Figs. \ref{fig:rmse_das_tau} and \ref{fig:rmse_das_nu}, respectively. The NRMSEs of DD-SRN-FMCW and DDIP are also plotted as benchmarks, where the NRMSEs are obtained using the OMP algorithm \cite{Rasheed2020OTFS_MU_OMP}. The corresponding CRB curves are also plotted for reference. We recall that both the pilot and data components are known to the collocated sensing receiver. As expected, the NRMSE performance of DD-SRN-FMCW is comparable to that of DDIP, which is consistent with the NMSE results in Fig.~\ref{fig:nmse_jcedd}. However, once data symbols are superimposed and used as prior at sensing the receiver, ODDM-DDIP achieves a slightly lower NRMSE than ODDM-FMCW. This is because when the data are known, ODDM-DDIP no longer suffers from the detection gain penalty observed in Fig.~\ref{fig:ber_vs_esn0}. The residual data interference cancellation error now uniformly distributes over the DD domain, dependent only on the DD estimation quality. In this circumstance, the more DD-localized DDIP signal is only subject to error from adjacent DD samples, whereas the DD chirp compression used for ODDM-FMCW collects error energy from a wider DD region, making it more sensitive to imperfect DD estimates across successive DAS iterations. Despite this, both ODDM-FMCW and ODDM-DDIP approach the NRMSE performance of pure DD-SRN-FMCW and DDIP signals. In the low to medium $E_s/N_0$ regime, the NRMSEs obtained by OMP and by the proposed DAS algorithm closely follow their respective CRBs. As $E_s/N_0$ increases, however, the simulated NRMSE curves exhibit a slightly reduced slope. This behavior is mainly due to the finite DD resolution of the signals and the resulting high correlation among closely spaced paths, which becomes significant at high $E_s/N_0$.

\begin{figure}[t]
    \centering
    \includegraphics[width=0.95\columnwidth,trim={0 0 0 0},clip]{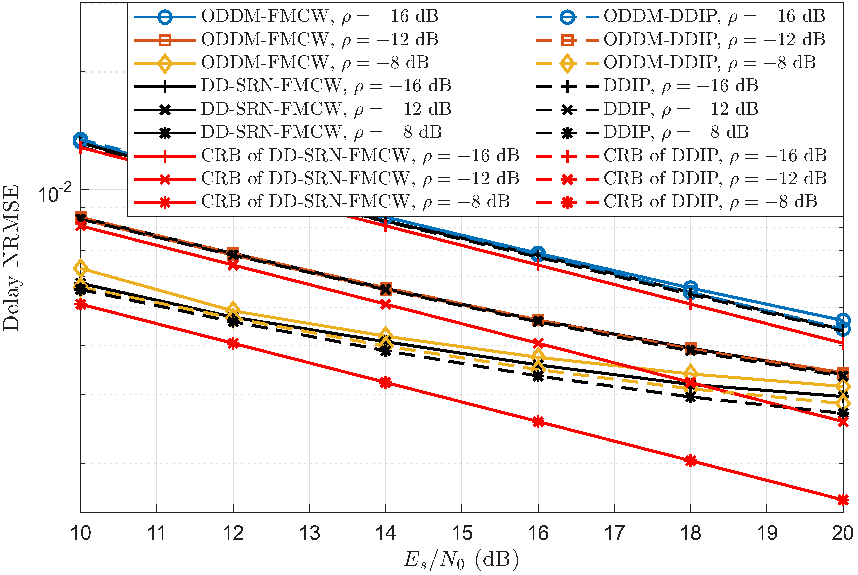}
    \caption{NRMSE of delay estimation at the sensing receiver. ODDM-FMCW and ODDM-DDIP use DAS; DD-SRN-FMCW and DDIP use OMP \cite{Rasheed2020OTFS_MU_OMP}.}
    \label{fig:rmse_das_tau}
\end{figure}

\begin{figure}[t]
    \centering
    \includegraphics[width=0.95\columnwidth,trim={0 0 0 0},clip]{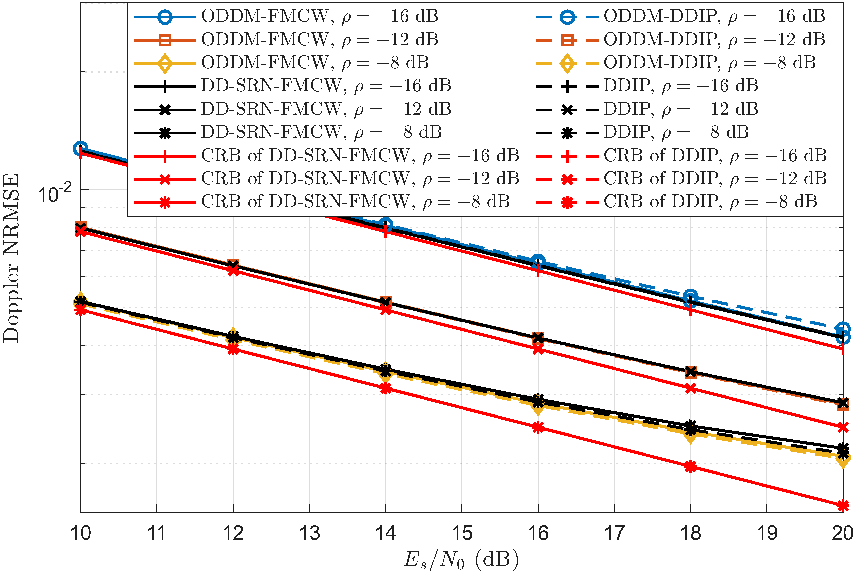}
    \caption{NRMSE of Doppler estimation at the sensing receiver. ODDM-FMCW and ODDM-DDIP use DAS; DD-SRN-FMCW and DDIP use OMP \cite{Rasheed2020OTFS_MU_OMP}.}
    \label{fig:rmse_das_nu}
\end{figure}

To further illustrate the sensing-communication tradeoff under a fixed total power constraint, we next consider a scenario where the total transmit SNR $(E_s+E_c)/N_0$ is kept constant while the CDPR $\rho$ is varied. In this case, increasing $\rho$ allocates more power to the DD-SRN-FMCW/DDIP component (and hence improves sensing), but reduces the power available for data symbols. Fig.~\ref{fig:ber_rho} shows the BER at the communication receiver versus $\rho$ for two total SNR levels, $(E_s + E_c)/N_0 = \SI{16}{\decibel}$ and $\SI{20}{\decibel}$. ODDM-FMCW and ODDM-DDIP both employ the JCEDD receiver, while ODDM with perfect CSI serves as a lower-bound benchmark using soft SIC-MMSE detection. For each total-power level, there exists an optimal range of $\rho\approx\SI{-10}{\decibel}$ where the BER is minimized. The reason is that, for low $\rho$, the pilot component is too weak to support accurate channel estimation and JCEDD suffers from error propagation; for large $\rho$, channel estimation improves but the reduced data power leads to a loss in detection performance. Similar to Fig.~\ref{fig:ber_vs_esn0}, ODDM-FMCW maintains a significant BER advantage over ODDM-DDIP across all $\rho$ thanks to the more distributed interference of DD-SRN-FMCW pilot to data symbols. Moreover, ODDM-FMCW can approach the perfect CSI benchmark at the optimal $\rho\approx\SI{-10}{\decibel}$, confirming its robustness under a fixed total-power constraint.

\begin{figure}[t]
    \centering
    \includegraphics[width=1\columnwidth,trim={0 0 0 0},clip]{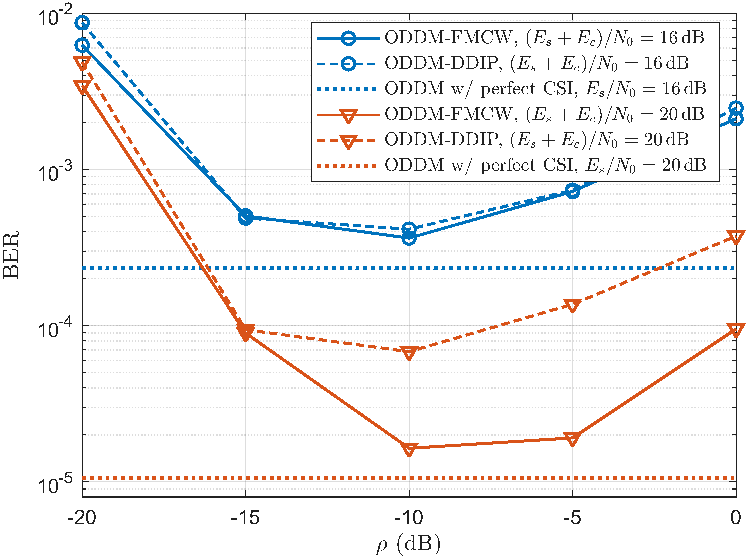}
    \caption{BER at the communication receiver with fixed total transmit power. ODDM-FMCW and ODDM-DDIP use JCEDD; ODDM w/ perfect CSI uses soft SIC-MMSE.}
    \label{fig:ber_rho}
\end{figure}

Figs.~\ref{fig:rmse_das_tau_rho} and \ref{fig:rmse_das_nu_rho} depict the corresponding delay and Doppler NRMSE at the collocated sensing receiver, again with fixed $(E_s + E_c)/N_0 = \SI{16}{\decibel}$ and $\SI{20}{\decibel}$. As expected, the NRMSE of both ODDM-FMCW and ODDM-DDIP decreases monotonically with $\rho$, since allocating more power to the pilot component directly enhances the effective SNR for parameter estimation. In addition, because the data symbols are known at the sensing receiver, ODDM-DDIP exhibits slightly lower NRMSE than ODDM-FMCW, consistent with the earlier discussion for Figs.~\ref{fig:rmse_das_tau} and \ref{fig:rmse_das_nu}. Overall, these results confirm that ODDM-FMCW can achieve a favorable sensing-communication tradeoff under a fixed total power constraint. In particular, $\rho\approx\SI{-10}{\decibel}$ provides BER close to the perfect CSI benchmark while maintaining a good sensing performance.

\begin{figure}[t]
    \centering
    \includegraphics[width=0.95\columnwidth,trim={0 0 0 0},clip]{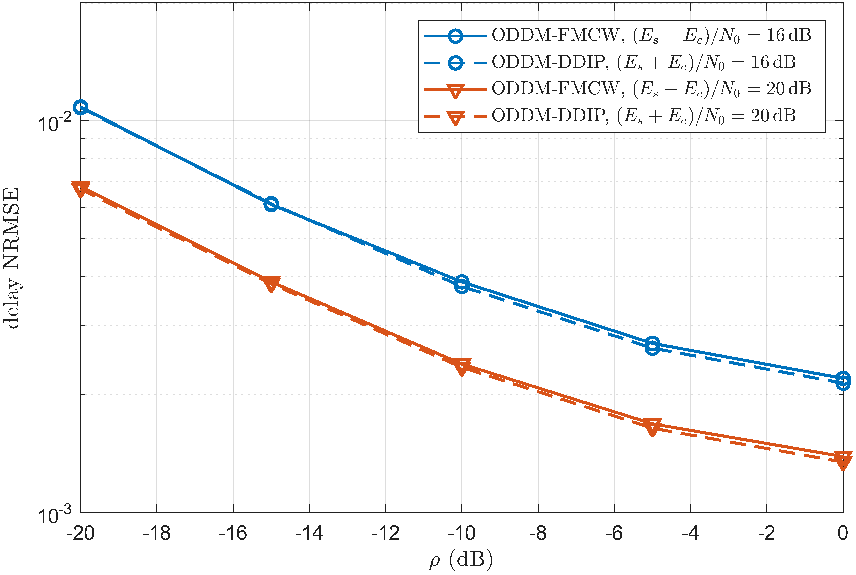}
    \caption{NRMSE of delay estimation at the sensing receiver using DAS.}
    \label{fig:rmse_das_tau_rho}
\end{figure}

\begin{figure}[t]
    \centering
    \includegraphics[width=0.95\columnwidth,trim={0 0 0 0},clip]{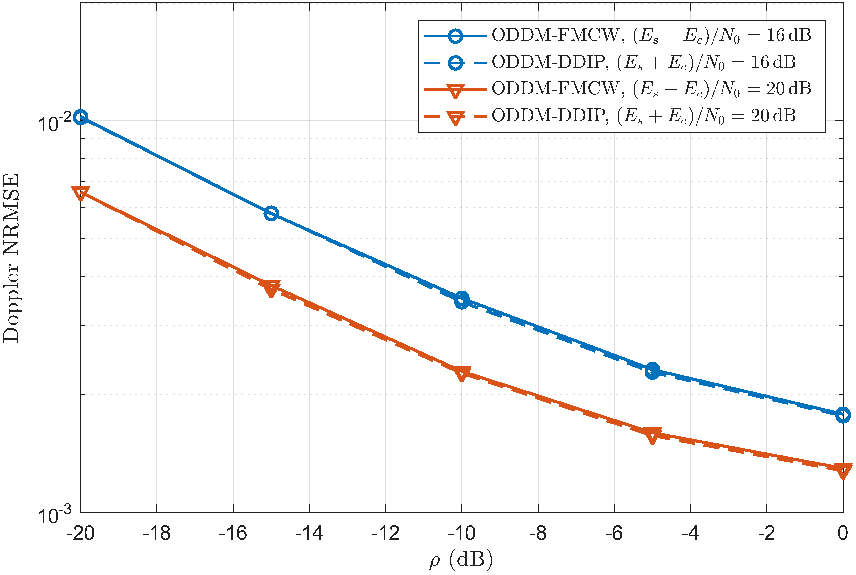}
    \caption{NRMSE of Doppler estimation at the sensing receiver using DAS.}
    \label{fig:rmse_das_nu_rho}
\end{figure}

\section{Conclusion}\label{sec:conclusion}

In this work, we presented ODDM-FMCW as a low-PAPR ISAC waveform for doubly-selective channels. We first proposed the DD-SRN-FMCW waveform to effectively integrate the FMCW waveform into the ODDM framework. Building on ODDM receiver operations, a DD chirp compression method is introduced to efficiently compute the DDR of the channel. DD-SRN-FMCW enjoys a high processing gain after DD chirp compression while circumventing the key drawbacks of conventional linear FMCW in ISAC applications.

Using the proposed DD-SRN-FMCW as a superimposed pilot for ODDM, we introduced the ODDM-FMCW waveform for ISAC, which can be directly transmitted by an ODDM transmitter. The high processing gain of DD-SRN-FMCW makes it easier to distinguish from random ODDM data symbols and facilitates interference cancellation methods. We further proposed a DAS algorithm for the collocated sensing receiver and a JCEDD algorithm for the communication receiver. We then presented a comprehensive analysis to prove the effectiveness of ODDM-FMCW, including the PAPR, spectrum, ambiguity function, and CRB. Numerical results show that ODDM-FMCW exhibits excellent ISAC performance in terms of BER for communications and NRMSE for sensing, while offering a significant reduction in PAPR compared to ODDM-DDIP. Moreover, using our JCEDD algorithm, ODDM-FMCW achieves lower BER than ODDM-DDIP, effectively approaching the BER performance of ODDM with perfect CSI. In general, ODDM-FMCW offers great potential for practical ISAC scenarios over doubly-selective channels.


\appendices

\section{Proof of Lemma \ref{lmm:zla}}\label{app:proof_zla}

We expand the LHS of \eqref{eq:zla} as
\begin{align}
    \sum_{m=0}^{M-1}\!\dot{c}[m]&\dot{c}^{*}[m\!+\!\zeta] = e^{-j2\pi\frac{\zeta}{M}\bigl(f_cT\!+\!\frac{\epsilon T^2\zeta}{2M}\bigr)}\!\sum_{m=0}^{M-1}\!e^{-j2\pi\frac{\epsilon T^2\zeta}{M^2}m}.
\end{align}
When $\frac{\epsilon T^2}{M}\in\mathbb{Z}$ and $\gcd\left(\frac{\epsilon T^2}{M},M\right)=1$, the inner summation simplifies to $\sum_{m=0}^{M-1} e^{-j2\pi\frac{\epsilon T^2\zeta}{M^2}m} = M\delta[\zeta]$. In addition, $e^{-j2\pi\frac{\zeta}{M}\bigl(f_cT+\frac{\epsilon T^2\zeta}{2M}\bigr)}=1$ when $\zeta=0$. These lead to the RHS of \eqref{eq:zla}, thus completing the proof.

\section{Proof of Proposition \ref{prop:zca}}\label{app:proof_zca}

Given Lemma \ref{lmm:zla}, $\frac{\epsilon T^2}{M}\in\mathbb{Z}$ and $\gcd\left(\frac{\epsilon T^2}{M},M\right)=1$ already ensure the ZLA property in \eqref{eq:zla}. To establish the ZCA property in \eqref{eq:zca}, it remains to show that the discrete chirp sequence $c[m]$ is periodic modulo $M$ if $f_cT+\frac{\epsilon T^2}{2}\in \mathbb{Z}$ is also satisfied, i.e., $c\bigl[[m]_{M}\bigr]=c[m]$. To verify this, we expand the LHS as
\begin{align}
    c\bigl[[m]_{M}\bigr]&=c[m+\kappa M]
    \nonumber\\
    &=c[m] e^{j2\pi\kappa\left(f_cT+\frac{\epsilon T^2}{2}\kappa\right)} e^{j2\pi\frac{\epsilon T^2}{M}m\kappa},
\end{align}
where $\kappa\in\mathbb{Z}$. When $\frac{\epsilon T^2}{M}\in\mathbb{Z}$, we have $e^{j2\pi\frac{\epsilon T^2}{M}m\kappa}=1$. In addition, $f_cT+\frac{\epsilon T^2}{2}\in \mathbb{Z}$ ensures that $e^{j2\pi\kappa\left(f_cT+\frac{\epsilon T^2}{2}\kappa\right)}=1$. Hence, the LHS becomes $c[m]$, which completes the proof.

\section{Derivation of $\frac{\partial Y[m,n]}{\partial l_p}$ and $\frac{\partial Y[m,n]}{\partial k_p}$}\label{app:dldk}
We first give expressions of two useful partial derivatives $\frac{\partial}{\partial l_p}g\!\left((d\!+\!\floor{l_p}\!-\!l_p)\frac{T}{M}\right)$ and $\frac{\partial}{\partial k_p}\phi(\tilde{n}+k_p-n)$, which characterize the ODDM pulse's sensitivity to $l_p$ and $k_p$, respectively. We consider $a(t)$ to be a SRRC pulse, so $g(t)$ is a raised-cosine pulse given by
\begin{align}\label{eq:gt}
    g(t) =
    \begin{cases} 
        \frac{\pi}{4}\sinc\left(\frac{1}{2\beta}\right), & \left|\bar{t}\right|=\frac{T}{2\beta M},
        \\
        \frac{\cos\left(\pi\beta\frac{M}{T}\bar{t}\right)}{1-\left(2\beta\frac{M}{T}\bar{t}\right)^2}\sinc\left(\frac{M\bar{t}}{T}\right), & \text{otherwise},
    \end{cases}
\end{align}
with $\bar{t}\triangleq t-Q\frac{T}{M}$. After some algebra, $\frac{\partial}{\partial l_p}g\!\left(\bar{l}\frac{T}{M}\right)$ can be expressed as \eqref{eq:dgdl}\addtocounter{equation}{1}, where we introduce the symbol $\bar{l}\triangleq d\!+\!\floor{l_p}\!-\!l_p-Q$ for brevity. Similarly, we introduce $\bar{k}=\tilde{n}\!+\!k_p\!-\!n$ and obtain
\begin{align}
    \frac{\partial}{\partial k_p}\phi\bigl(\bar{k}\bigr) = \frac{j2\pi}{N^2}\sum_{n'=0}^{N-1} n' e^{j2\pi\frac{n'\bar{k}}{N}}.
\end{align}

Finally, we can derive the derivatives of $Y[m,n]$ w.r.t. $l_p$ as in \eqref{eq:dY/dl} and w.r.t. $k_p$ as in \eqref{eq:dY/dk}.

\newcounter{MYtempeqncnt}
\begin{figure*}[!t]
    \normalsize
    \setcounter{MYtempeqncnt}{\value{equation}}
    \setcounter{equation}{52}
    \begin{align}\label{eq:dgdl}
        \frac{\partial}{\partial l_p}g\!\left(\bar{l}\frac{T}{M}\right)&=
        \begin{cases}
            0 & \bigl|\bar{l}\bigr| = \frac{1}{2\beta}
            \\
            \!\begin{aligned}
                \frac{\pi\beta \sinc\bigl(\bar{l}\bigr) \sin\bigl(\pi\beta \bar{l}\bigr)}{1-4 \beta^2 \bar{l}^2}\!-\!\frac{8 \beta^2\bar{l}\sinc\bigl(\bar{l}\bigr)\cos\bigl(\pi\beta\bar{l}\bigr)}{\left(1-4 \beta^2 \bar{l}^2\right)^2}\!+\!\frac{\pi\cos\bigl(\pi\beta\bar{l}\bigr)\biggl(\frac{\sin\left(\pi\bar{l}\right)}{\pi^2\bar{l}^2}\!-\!\frac{\cos\left(\pi\bar{l}\right)}{\pi\bar{l}}\biggr)}{1-4\beta^2\bar{l}^2}
            \end{aligned}
            & \text{otherwise}.
        \end{cases}
    \end{align}
    \setcounter{equation}{54}
    \begin{align}
        \frac{\partial Y[m,n]}{\partial l_p}&=h_p \sum_{d=0}^{2Q} e^{j2\pi\frac{(m\!-\!l_p\!-\!d)k_p}{MN}}\!\Biggl(\!\frac{\partial}{\partial l_p}g\!\left(\!\bar{l}\frac{T}{M}\!\right)\!-\!\frac{j2\pi k_p}{MN}g\!\left(\!\bar{l}\frac{T}{M}\!\right)\!\Biggr)\!\sum_{\tilde{n}=0}^{N-1}\phi\!\left(\bar{k}\right) \psi[m,d,\tilde{n}] X\!\left[\left[m\!-\!\floor{l_p}\!-\!d\right]_{M},\tilde{n}\right].
        \label{eq:dY/dl}\\
        \frac{\partial Y[m,n]}{\partial k_p} &= h_p \sum_{d=0}^{2Q} e^{j2\pi\frac{(m-l_p-d)k_p}{MN}} g\!\left(\!\bar{l}\frac{T}{M}\!\right) \sum_{\tilde{n}=0}^{N-1} \Biggl(\frac{\partial}{\partial k_p}\phi\left(\bar{k}\right)\!+\!\frac{j2\pi(m\!-\!l_p\!-\!d)}{MN}\phi\!\left(\bar{k}\right)\Biggr) \psi[m,d,\tilde{n}] X\!\left[\left[m\!-\!\floor{l_p}\!-\!d\right]_{M},\tilde{n}\right].
        \label{eq:dY/dk}
    \end{align}
    \hrulefill
    \vspace*{4pt} 

    \setcounter{equation}{\value{MYtempeqncnt}} 
\end{figure*}
\addtocounter{equation}{4}



\ifCLASSOPTIONcaptionsoff
  \newpage
\fi



\bibliographystyle{IEEEtran}
\bibliography{references}
%

\end{document}